\documentclass[pdflatex,sn-basic]{sn-jnl}

\usepackage{graphicx}%
\usepackage{multirow}%
\usepackage{amsmath,amssymb,amsfonts}%
\usepackage{amsthm}%
\usepackage{mathrsfs}%
\usepackage[title]{appendix}%
\usepackage{xcolor}%
\usepackage{textcomp}%
\usepackage{manyfoot}%
\usepackage{booktabs}%
\usepackage{algorithm}%
\usepackage{algorithmicx}%
\usepackage{algpseudocode}%
\usepackage{listings}%
\usepackage{graphicx}
\usepackage{wrapfig}
\usepackage{placeins}
\usepackage{booktabs}
\usepackage{tabularx}
\usepackage{array}
\usepackage{float}

\theoremstyle{thmstyleone}%
\newtheorem{theorem}{Theorem}
\newtheorem{proposition}[theorem]{Proposition}%

\theoremstyle{thmstyletwo}%
\newtheorem{remark}{Remark}%

\theoremstyle{thmstylethree}%
\newtheorem{definition}{Definition}%
\newtheorem{hypothesis}{Hypothesis}
\newcommand{\Hh}{\mathcal{H}}

\newcommand{\C}{\mathbb{C}}
\newcommand{\R}{\mathbb{R}}

\newcommand{\dd}{\mathrm{d}}
\newcommand{\Atr}{\mathcal{A}_{\mathrm{tr}}}
\newcommand{\Loc}{\operatorname{Loc}}
\newcommand{\id}{\mathrm{id}}

\newcommand{\EP}{\mathcal{E}_{P}}
\newcommand{\Gad}{\mathscr{G}_{\mathrm{ad}}}
\newcommand{\Pad}{\mathscr{P}_{\mathrm{ad}}}

\newcommand{\Dom}{\operatorname{Dom}}
\newcommand{\HH}{\mathbb{H}}
\newcommand{\SU}{\mathrm{SU}}

\newcommand{\Sig}{\Sigma}

\begin{document}

\title[Quaternionic protein deformation response]{Quaternionic Response Geometry for Proteins: Toward a Noncommutative Theory of Ordered Deformations}


\author[1]{\fnm{Xiaoting} \sur{Chen}}

\author[1,4]{\fnm{Chon-Fai} \sur{Kam}}

\author[2]{\fnm{Yu} \sur{Li}}

\author[3]{\fnm{David} \sur{Medina-Ortiz}}

\author[4]{\fnm{Cedric} \sur{Damour}}

\author[4]{\fnm{Jean Pierre} \sur{Chabriat}}

\author[5]{\fnm{Alain} \sur{Miranville}}

\author[4]{\fnm{Miloud} \sur{Bessafi}}

\author*[1,6]{\fnm{Frederic} \sur{Cadet}}\email{
frederic.cadet.run@gmail.com}

\affil*[1]{\orgname{University Paris City \& University of Reunion}, \orgaddress{\city{Paris}, \postcode{75015}, \country{France}}}

\affil[2]{\orgdiv{School of Information Science and Technology and Beijing Institute of Artificial Intelligence}, \orgaddress{\city{Beijing}, \country{China}}}

\affil[3]{\orgdiv{Departamento de Ingeniería en Computación}, \orgname{Universidad de Magallanes}, \orgaddress{\city{Punta Arenas}, \country{Chile}}}

\affil[4]{\orgdiv{ENERGYLab}, \orgname{University of Reunion}, \orgaddress{\city{Saint-Denis}, \country{France}}}

\affil[5]{\orgdiv{Laboratoire de Mathématiques Appliquées du Havre (LMAH)}, \orgname{Université Le Havre Normandie}, \orgaddress{\city{Le Havre}, \country{France}}}

\affil*[6]{\orgdiv{PEACCEL}, \orgname{AI for Biologics}, \orgaddress{\city{Paris}, \postcode{} \country{France}}}


\abstract{
Protein function may depend not only on endpoint conformations but also
on the ordered deformation histories through which they are reached.
This distinction is relevant to allostery, conformational switching,
mutation-induced rearrangements, and epistatic effects, where different
perturbation sequences may produce similar visible structures while
retaining distinct internal transport histories. Current state-centered
or endpoint-centered representations do not always preserve this
order-sensitive information. The practical motivation is therefore to
provide a foundation for future descriptors of protein deformation
trajectories that can distinguish ordered histories even when endpoint
conformations are similar. Such descriptors could support analyses of
allosteric switching, mutation-order effects, conformational memory,
and path-dependent response in molecular-dynamics trajectories, NMR
ensembles, structural families, and outputs of geometric generative
models.

We propose a deformation-first geometric framework based on
quaternionic frame transport along the protein backbone. Local backbone
frames are lifted to quaternionic variables, with infinitesimal rotation
encoded by
\(
\Omega(\ell)=2\,q(\ell)^{-1}\partial_\ell q(\ell).
\)
Ordered concatenation of admissible deformation paths generates a
noncommutative transport algebra, recording that deformation \(A\)
followed by \(B\) need not be equivalent to \(B\) followed by \(A\).
From this ordered transport layer, we construct a spectral-response
layer comprising a global Dirac-type operator, local spectral germs, a
renormalized spectral density, and a mixed response form.

A minimal realization on an idealized \(\alpha\)-helix shows how
localized pitch and bending perturbations can yield similar endpoint
descriptors while producing a nonzero endpoint-derived ordered-transport discrepancy.
At the formal level, the
framework separates an order-sensitive transport-memory sector, lost
under a commutative shadow, from a spectral-response sector that remains
visible.
}

\keywords{
protein deformation,
quaternionic frame transport,
ordered transport algebra,
noncommutative geometry,
spectral response,
transport memory
}
\pacs[MSC Classification]{Primary 46L87, 58J52; Secondary 81Q10, 53C80, 92C40}

\maketitle

\section{Introduction}\label{sec:intro}

Proteins are not rigid molecular objects. Their biological function often depends on conformational ensembles, local rearrangements, allosteric communication, and dynamical response pathways \citep{thirumalai2019symmetry}. A ligand-binding event, mutation, change in local flexibility, or mechanical perturbation can alter not only the final conformation of a protein but also the pathway through which that conformation is reached \citep{corbella2023loop}.

\begin{figure}[t]
    \centering
    \includegraphics[width=\textwidth]
    {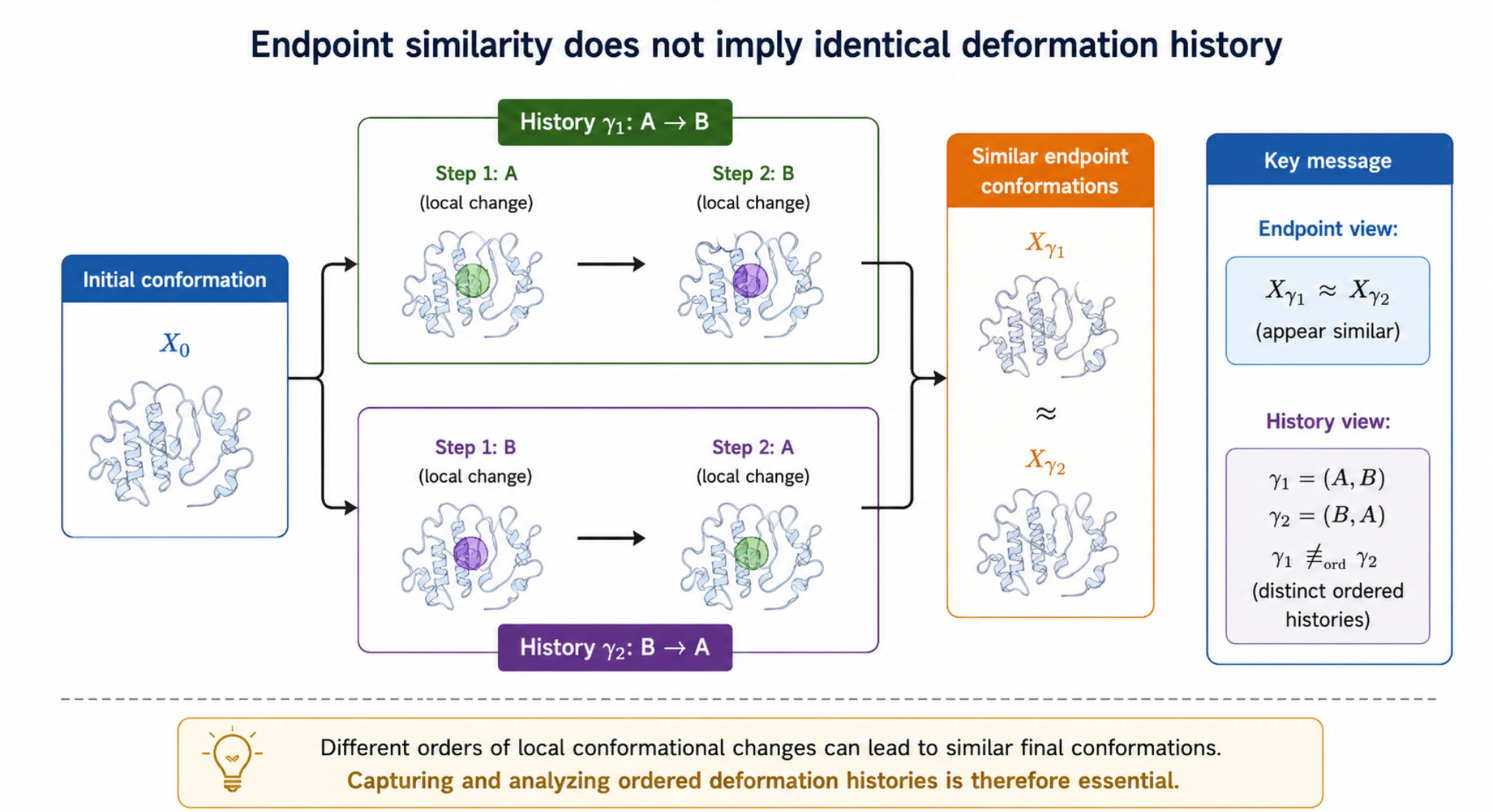}
    \caption{
    Endpoint similarity does not imply identical deformation history.
    Starting from the same initial conformation \(X_0\), two localized
    conformational changes \(A\) and \(B\) may occur in different orders,
    defining the deformation histories
    \(\gamma_1=(A,B)\) and \(\gamma_2=(B,A)\).
    Although their endpoint conformations may be similar,
    \(X_{\gamma_1} \approx X_{\gamma_2}\), the ordered histories remain
    distinct, \(\gamma_1 \not\equiv_{\mathrm{ord}} \gamma_2\).
    }
    \label{fig:endpoint-history}
\end{figure}

Figure~\ref{fig:endpoint-history}. Starting from the same initial conformation, two localized conformational changes may occur in different orders, producing distinct deformation histories while remaining similar at the level of endpoint geometry. This observation motivates the central question addressed in this work: how can a representation preserve deformation history rather than only the final structure?

Current computational methods are highly effective at representing, predicting, and generating protein states. Structure-prediction and protein-design approaches, including AlphaFold \citep{Jumper2021}, AlphaFold~3 \citep{Abramson2024}, RFdiffusion \citep{Watson2023}, and ProteinMPNN \citep{Dauparas2022}, have transformed biomolecular structure inference and design. Meanwhile, molecular-dynamics, coarse-grained, and recent geometric learning methods provide access to trajectory-level descriptions, including learned time-coarsened dynamics \citep{Klein2023}, multiresolution stochastic surrogates \citep{Schreiner2023}, geometric trajectory diffusion \citep{Han2024}, and generative modeling of molecular-dynamics trajectories \citep{Jing2024}. Nevertheless, once a deformation history is reduced to an endpoint conformation, the order in which local rearrangements occurred may be lost.

This loss can be biologically significant. A localized pitch deformation followed by bending may produce a different geometric or functional outcome than the same two perturbations applied in the reverse order \citep{Changeux2005,Motlagh2014,HenzlerWildman2007}. Such order dependence may be relevant to allosteric regulation, conformational switching, mutation-driven remodeling, and epistasis. This raises a central question: how can protein deformation be represented in a way that preserves ordered transport history without reducing distinct deformation pathways to their endpoint geometry?

\subsection*{Beyond endpoint representations}
\vspace{-0.3em}

Many protein-analysis workflows compare conformations through endpoint descriptors, including root-mean-square deviation (RMSD), distance maps, contact patterns, torsion summaries, latent embeddings, or generated structural states \citep{Jing2021}. These representations are essential, but they may identify two conformations as similar even when the ordered sequence of local rearrangements that produced them is different \citep{moradi2024review}. 

The present framework is intended to provide a mathematical basis for future computable descriptors of path-dependent protein response. Structural ensembles, NMR models, molecular-dynamics trajectories, or samples from generative protein models \citep{Yim2023,Huguet2024} could eventually be converted into discrete quaternionic transport fields, from which order-memory and spectral-response signatures could be extracted and compared across deformation histories.

To address this problem, we develop a deformation-first geometric framework based on the transport of adapted frames along the protein backbone. Local frame rotations are represented by quaternionic variables, providing a compact and nonsingular description of three-dimensional rotational deformation. Ordered deformation paths are then treated as primary objects, since two local perturbations that do not commute produce distinct transport histories when applied in different orders, even when their endpoint descriptors are close.

These histories generate a noncommutative ordered transport algebra
\[
\mathcal A_{\mathrm{tr}}(\Sigma),
\]
defined on the physical quaternionic locus
\(\Sigma\subset P_{\mathbb C}\). Its noncommutativity records the
order of admissible deformation paths, while a commutative shadow
separates order-sensitive transport memory from collapsed response
observables. From this transport structure, we construct a
Dirac-type spectral layer consisting of a global operator
\(\mathfrak D\), intrinsic local spectral germs
\(\mathfrak L_\xi(\mathfrak D)\), a renormalized spectral density, and
a mixed response form. The response geometry is therefore extracted
from ordered transport rather than postulated on a predetermined space
of endpoint conformations.

The working hierarchy is
\begin{equation}
\begin{aligned}
P_{\mathbb C} \supset \Sigma
&\longrightarrow \mathcal A_{\mathrm{tr}}(\Sigma)
\longrightarrow \mathfrak D
\longrightarrow \bigl(\xi \mapsto \mathfrak L_\xi(\mathfrak D)\bigr)
\\
&\longrightarrow \rho^{\mathrm{ren}}_{\chi}(\xi,\bar{\xi})
\longrightarrow \mathcal G_{\chi}(\xi)
\longrightarrow \text{memory sectors}.
\end{aligned}
\label{eq:deformation-first-chain}
\end{equation}

For readers coming from protein science rather than noncommutative or
spectral geometry, the principal objects in
Eq.~\eqref{eq:deformation-first-chain} are summarized in
Table~\ref{tab:object-correspondence}.

\begin{table}[!htbp]
\centering
\caption{Correspondence between the mathematical objects of the
deformation-first hierarchy, their protein-science interpretation, and
their role in the construction.}
\label{tab:object-correspondence}
\begin{tabularx}{\textwidth}{
    >{\raggedright\arraybackslash}p{0.15\textwidth}
    >{\raggedright\arraybackslash}p{0.33\textwidth}
    >{\raggedright\arraybackslash}X}
\toprule
\textbf{Mathematical object}
&
\textbf{Protein-science interpretation}
&
\textbf{Role in the manuscript}
\\
\midrule

$P_{\mathbb C}$
&
Complex deformation space containing admissible deformation modes after
complexification.
&
Provides the ambient space in which deformation families and mixed
spectral-response geometry are formulated.
\\
\cmidrule(lr){1-3}

$\Sigma$
&
Physical quaternionic locus compatible with admissible
protein-derived transport states.
&
Provides the physical support on which ordered transport paths are
defined.
\\
\cmidrule(lr){1-3}

$q(\ell)$ and
$\Omega(\ell)$
&
Quaternionic lift of local backbone frames and the corresponding
infinitesimal rotational transport field.
&
Encodes local frame orientation, bending, twisting, and rotational
deformation along the backbone.
\\
\cmidrule(lr){1-3}

$\mathcal P_{\mathrm{ad}}(\Sigma)$ and
$\mathcal G_{\mathrm{ad}}(\Sigma)$
&
Admissible ordered deformation paths and their groupoid of
concatenations.
&
Makes deformation histories primary objects rather than endpoint
reductions.
\\
\cmidrule(lr){1-3}

$\mathcal A_{\mathrm{tr}}(\Sigma)$
&
Ordered transport algebra generated by admissible deformation histories.
&
Retains order-sensitive transport information; its noncommutativity
records that deformation $A$ followed by $B$ need not be equivalent to
$B$ followed by $A$.
\\
\cmidrule(lr){1-3}

$\sigma$
&
Unitary cocycle or geometric phase induced by the transport connection.
&
Provides the twisting datum associated with order-sensitive geometric
phase information.
\\
\cmidrule(lr){1-3}

$E:\mathcal A_{\mathrm{tr}}(\Sigma)\to C_c^\infty(\Sigma)$
&
Commutative shadow or endpoint-level projection.
&
Separates order-sensitive transport information from collapsed
commutative observables.
\\
\cmidrule(lr){1-3}

$\mathcal M_{\mathrm{ord}}=\ker E$
&
Order-memory sector.
&
Contains the transport information that is lost under the commutative
shadow.
\\
\cmidrule(lr){1-3}

$\mathfrak D$
&
Global Dirac-type operator associated with the infinitesimal transport
realization.
&
Provides the spectral realization from which local response quantities
are extracted.
\\
\cmidrule(lr){1-3}

$\mathfrak L_\xi(\mathfrak D)$
&
Intrinsic local spectral germ at the physical transport state
$\xi\in\Sigma$.
&
Localizes the spectral response while retaining intrinsic information
from the global Dirac datum.
\\
\cmidrule(lr){1-3}

$\rho^{\mathrm{ren}}_\chi$
&
Renormalized local spectral density.
&
Quantifies local spectral response after finite-part renormalization.
\\
\cmidrule(lr){1-3}

$\mathcal G_\chi
 = i\partial\bar\partial\rho^{\mathrm{ren}}_\chi$
&
Mixed response form.
&
Encodes local sensitivity of the deformation-response landscape.
\\
\cmidrule(lr){1-3}

$\Phi_{\chi,W}$
&
Local response potential, whenever local
$\partial\bar\partial$-exactness holds.
&
Provides a scalar potential representation of the response geometry on
suitable open sets.
\\
\bottomrule
\end{tabularx}
\end{table}

Equation~\eqref{eq:deformation-first-chain}, together with 
Table~\ref{tab:object-correspondence}, summarizes how visible protein deformation modes are lifted into ordered transport histories and then converted into spectral-response quantities. 

\subsection*{Main contributions}
\vspace{-0.3em}

The main contributions of this work are as follows. First, we formulate ordered protein deformations as primary transport objects, while endpoint states are treated as reduced geometric descriptions. Second, we introduce a quaternionic transport algebra encoding noncommuting deformation histories. Third, we construct a Dirac-type spectral-response layer from the ordered transport structure. Fourth, we distinguish order-memory and response-memory sectors through their behavior under a commutative shadow. Fifth, we illustrate the framework on an idealized $\alpha$-helix with localized pitch and bending perturbations.

The remainder of the paper is organized as follows.
Section~\ref{sec:background} introduces the biological motivation and the mathematical background of ordered protein deformation.
Section~\ref{sec:constitutive-axioms} introduces the deformation-first framework. 
Section~\ref{sec:deformation-transport} introduces the deformation datum, admissible quaternionic transport paths, the ordered transport algebra, and its commutative shadow. 
Section~\ref{sec:canonical-realization} develops the canonical geometric realization and the normalized unitary cocycle. 
Section~\ref{sec:infinitesimal-dirac} constructs the infinitesimal transport module and the global Dirac operator. 
Section~\ref{sec:localization-response} defines intrinsic local spectral germs, renormalized response densities, and mixed response geometry. 
Section~\ref{sec:helical-realization} presents a minimal helical realization on an idealized \(\alpha\)-helix.
Section~\ref{sec:memory-collapse} separates order-memory and response-memory sectors.
Section~\ref{sec: limit} discusses limitations and empirical outlook, 
and Section~\ref{sec13} concludes the paper.

\subsection*{Reader's map}
\vspace{-0.3em}
Readers primarily interested in the protein-science motivation and computational interpretation may focus on Sections~\ref{sec:intro}--\ref{sec:background}, the conceptual hierarchy in Eq.~\eqref{eq:deformation-first-chain} and Table~\ref{tab:object-correspondence}, the illustrative \(\alpha\)-helical realization in Section~\ref{sec:helical-realization}, and the interpretation and outlook in Sections~\ref{sec:memory-collapse}--\ref{sec13}. Sections~\ref{sec:constitutive-axioms}--\ref{sec:localization-response} provide the formal noncommutative and spectral-geometric construction, including the transport groupoid and algebra, the geometric cocycle, the Dirac realization, and the intrinsic local spectral response. In particular, Section~\ref{sec:helical-realization} should be read as an endpoint-derived numerical illustration of order sensitivity, whereas the formal order-memory sector is defined algebraically through the commutative-shadow construction.

\section{Background and reader guide}\label{sec:background}
This section develops the biological and mathematical background for the deformation-first framework. We first explain why endpoint conformations alone may not retain the ordered history of protein deformation. We then review local backbone frames, quaternionic rotations, and the geometric origin of noncommutativity. After summarizing the mathematical domains used in the construction, we formulate the central problem by distinguishing endpoint-level representations from history-level representations. Figure~\ref{fig:order-aware-transport} provides a schematic overview of this conceptual organization.

\begin{figure*}[t]
    \centering
    \includegraphics[width=\textwidth]
    {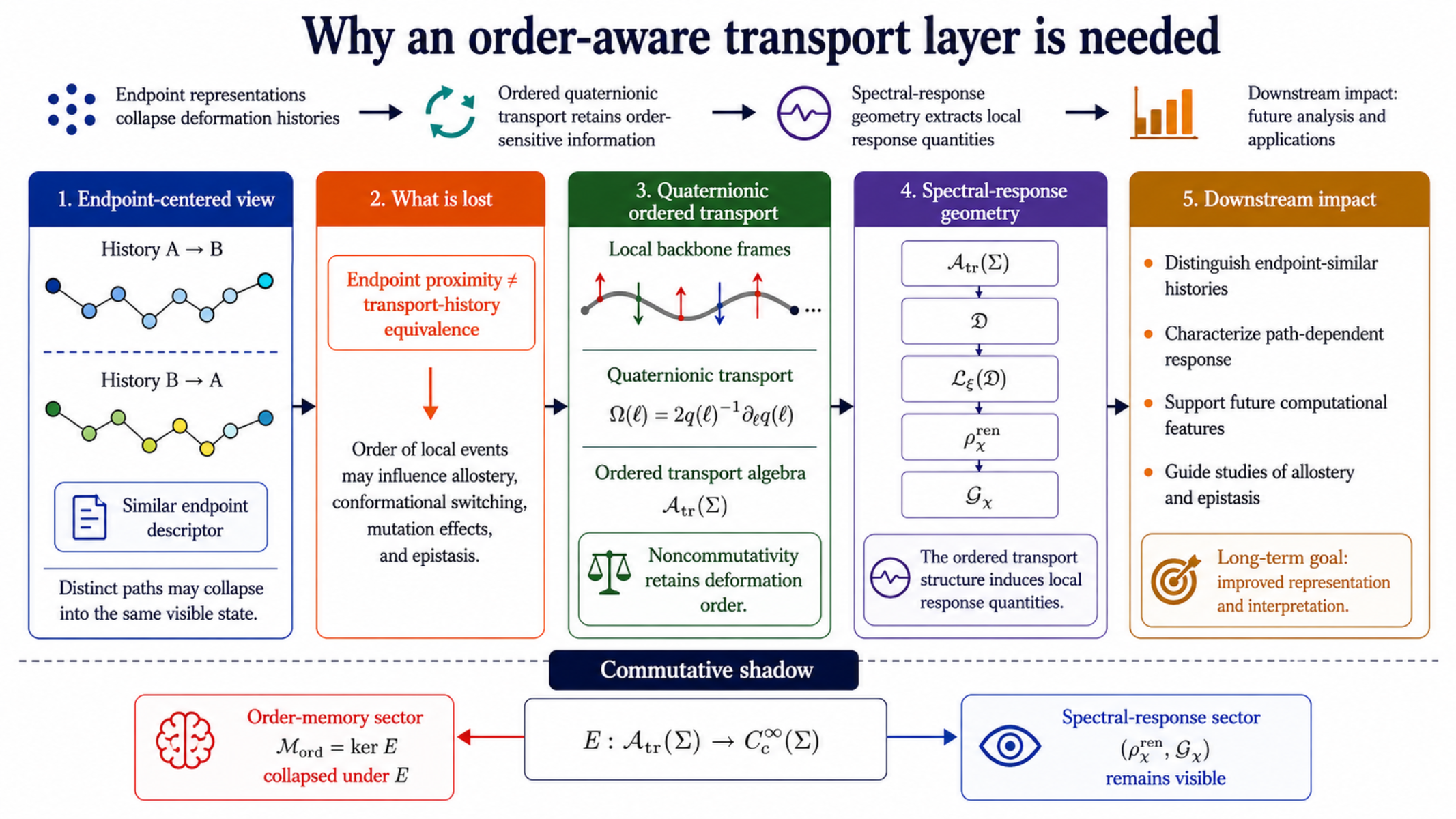}
    \caption{Conceptual overview of deformation-memory preservation. Endpoint-centered representations can collapse distinct ordered deformation histories into similar final conformations. The proposed quaternionic transport layer lifts local backbone frames to ordered transport variables, preserving information about the sequence of local rearrangements. The resulting noncommutative transport algebra separates order-memory information from endpoint-level response descriptors, while the spectral-response layer extracts local response quantities from the ordered transport geometry.}
    \label{fig:order-aware-transport}
\end{figure*}

\subsection{Protein deformation beyond endpoint conformations}
A protein conformation is often represented by a set of atomic coordinates, a backbone trace, or a reduced structural descriptor \citep{agnihotry2022protein}. Such representations are essential, but they mainly describe where the protein is observed after a perturbation. They do not necessarily retain the ordered sequence of local rearrangements that produced this state \citep{moradi2024review}.

This distinction is biologically relevant because protein function often depends on the organization of conformational changes. In allostery, a local perturbation may propagate through a sequence of structural couplings before influencing a distant functional site \citep{Changeux2005,Motlagh2014,HenzlerWildman2007}. In epistasis, the effect of one mutation may depend on whether another perturbation has already occurred \citep{StarrThornton2016}. Likewise, different intermediate rearrangements may guide a protein toward different functional basins even when their endpoint conformations are similar.

\subsection{Local backbone frames and quaternionic rotations}

To describe protein deformation geometrically, one may follow the
protein backbone and attach a local orthonormal frame along it. Such a
frame records the local orientation of the chain. Classical choices
include Frenet frames, while Bishop frames or related adapted frames are
useful near curvature degeneracies where Frenet torsion becomes
unstable \citep{Bishop1975}. Discrete framed-curve and elastic-rod
formulations provide corresponding tools for frame transport along
discretized backbones \citep{Bergou2008}.

Local frame rotations are conveniently represented through unit
quaternions, or equivalently through the Lie group
\(\mathrm{SU}(2)\), which double-covers
\(\mathrm{SO}(3)\). Let
\[
q(\ell)\in\mathrm{SU}(2)
\]
denote a quaternionic lift of the adapted frame, where \(\ell\) is arc
length. The associated infinitesimal transport field is

\begin{equation}
\Omega(\ell)=2q(\ell)^{-1}\partial_\ell q(\ell),
\label{eq:quaternionic-transport-field}
\end{equation}
which describes the instantaneous rotation of the local frame along the
backbone.

Quaternionic frame descriptions have proved useful in protein geometry,
while Bishop-frame and discrete-rod formulations provide robust
alternatives near framing singularities
\citep{Hanson1994,Bergou2008,Bishop1975}. Related frame-based
representations have also appeared in recent geometric learning models
for protein backbones \citep{Lin2023}.

\subsection{Why noncommutativity enters the problem}
Noncommutativity arises naturally from the kinematics of protein backbones. A local conformational change, such as a rotation about a backbone dihedral angle (i.e., a change in \(\phi\) or \(\psi\)), reorients the downstream backbone and thereby changes the local frame in which subsequent deformations are expressed. Consequently, applying two localized conformational changes in opposite orders generally yields different backbone geometries, even when the same elementary motions are involved.

This order dependence reflects an intrinsic property of protein deformation rather than a particular mathematical formalism. Representations based solely on endpoint conformations may therefore identify similar final structures while failing to distinguish the ordered deformation histories that generated them.

\subsection{Mathematical domains used in the construction}
The construction combines several mathematical domains.
First, differential geometry is used to describe adapted frames, local
transport, curvature, and connections along deformation spaces. Second,
quaternionic geometry is used to represent local rotations of protein
backbone frames through the double cover
\[
\mathrm{SU}(2)\to \mathrm{SO}(3),
\]
in connection with modern frame-based and equivariant geometric
representations \citep{Fuchs2020,Yim2023,Ruhe2023}.
Third, noncommutative geometry provides an algebraic language for
ordered transport, where path concatenation does not collapse to a
commutative endpoint description
\citep{Connes1985}. Fourth, spectral geometry enters
through a Dirac-type operator and its local spectral response.

\subsection{Problem formalisation: representing deformation histories beyond endpoint descriptors}
\label{subsec:problem-formalisation}

A protein sequence is denoted by
\[
    s = (a_1,\ldots,a_L) \in \mathcal{A}^L,
\]
where \(\mathcal{A}\) is the amino-acid alphabet and \(L\) is the protein length. For a fixed sequence \(s\), let
\[
    \mathcal{X}_s
\]
denote the set of structural states associated with this sequence. An element
\[
    X \in \mathcal{X}_s
\]
may represent an atomic structure, a backbone trace, an NMR ensemble member, a molecular-dynamics frame, or a generated conformation.

Let \(\mathcal{Z}\) be a global representation space, with
\[
\mathcal{Z}_{\mathrm{end}} \subset \mathcal{Z},
\qquad
\mathcal{Z}_{\mathrm{hist}} \subset \mathcal{Z},
\]
denoting the endpoint- and history-level representation sectors,
respectively.
Endpoint-centered approaches use an endpoint representation map
\[
\phi_{\mathrm{end}}:\mathcal{X}_s\to \mathcal{Z}_{\mathrm{end}},
\]
together with an endpoint discrepancy
\[
\Delta_{\mathrm{end}}: \mathcal{Z}_{\mathrm{end}} \times \mathcal{Z}_{\mathrm{end}} \to \mathbb{R}_{\geq 0},
\]
where smaller values indicate closer endpoint representations.
This formulation includes many standard choices. For Cartesian coordinates, \(\phi_{\mathrm{end}}\) may be the aligned coordinate representation and \(\Delta_{\mathrm{end}}\) may be RMSD. For distance maps, \(\phi_{\mathrm{end}}\) may be the pairwise-distance matrix and \(\Delta_{\mathrm{end}}\) a Frobenius-type matrix distance. For contact maps, \(\phi_{\mathrm{end}}\) may be a binary contact representation and \(\Delta_{\mathrm{end}}\) a Hamming- or Jaccard-type distance. For torsion summaries, \(\phi_{\mathrm{end}}\) may encode backbone dihedral angles and \(\Delta_{\mathrm{end}}\) an angular discrepancy. For learned structural embeddings, \(\phi_{\mathrm{end}}\) may be a latent vector and \(\Delta_{\mathrm{end}}\) a Euclidean or cosine-based distance.

These choices are well suited to comparing final conformations. However, they share a structural limitation: their input is an endpoint state \(X\). If two deformation processes have already been collapsed to their final conformations, any information about the order of the local rearrangements that produced those conformations is no longer available to \(\phi_{\mathrm{end}}\) or \(\Delta_{\mathrm{end}}\).

To formalize this missing information, let \(\Gamma_s\) denote the space of deformation histories associated
with the protein sequence \(s\). An element of \(\Gamma_s\) is written as
\[
\gamma=(d_1,\ldots,d_n)\in\Gamma_s,
\]
where each \(d_i\) denotes a localized deformation event, such as a bend, twist, pitch change, mutation-induced rearrangement, ligand-induced response, or local backbone reorientation. The endpoint map
\[
    \Pi : \Gamma_s \to \mathcal{X}_s
\]
assigns to each history its final conformation,
\[
    X_\gamma = \Pi(\gamma).
\]
The central problem is that two histories may be close at the endpoint level while remaining different as ordered deformation processes. In other words, it may happen that
\[
    \Delta_{\mathrm{end}}\!\left(
        \phi_{\mathrm{end}}(X_{\gamma_1}),
        \phi_{\mathrm{end}}(X_{\gamma_2})
    \right) \approx 0,
\]
while
\[
    \gamma_1 \not\equiv_{\mathrm{ord}} \gamma_2.
\]
Here, \(\gamma_1 \not\equiv_{\mathrm{ord}} \gamma_2\) means that the two histories are not equivalent as ordered compositions of local deformation events. This distinction is relevant when the order of perturbations affects the protein response, as in allostery, conformational switching, mutation-order effects, or epistatic rearrangements.

The objective is to construct a history-level representation
\[
    \psi_{\mathrm{hist}} : \Gamma_s \to \mathcal{Z}_{\mathrm{hist}},
\]
together with a history discrepancy
\[
    \Delta_{\mathrm{hist}} :
    \mathcal{Z}_{\mathrm{hist}} \times \mathcal{Z}_{\mathrm{hist}}
    \to \mathbb{R}_{\geq 0},
\]
such that histories collapsed by endpoint descriptors may still be separated at the history level:
\[
    \Delta_{\mathrm{end}}\!\left(
        \phi_{\mathrm{end}}(X_{\gamma_1}),
        \phi_{\mathrm{end}}(X_{\gamma_2})
    \right) \approx 0,
\]
but
\[
    \Delta_{\mathrm{hist}}\!\left(
        \psi_{\mathrm{hist}}(\gamma_1),
        \psi_{\mathrm{hist}}(\gamma_2)
    \right) > 0.
\]
This formulation also explains why the history cannot be treated as a purely symbolic list of events. First, protein deformation is spatial: local events act on a three-dimensional backbone, and their effect depends on the local frame in which they occur. Second, two histories with the same symbolic labels may produce different geometric effects if the local backbone orientation has changed. Third, direct comparison of event lists does not capture the accumulated rotational transport induced along the backbone. A history representation must therefore encode not only the order of events, but also their local geometric action.

This is where quaternionic frame transport enters the construction. 
As described by Eq.~\eqref{eq:quaternionic-transport-field},
quaternionic frame transport provides a compact and nonsingular
representation of local three-dimensional backbone rotation. Its role
here is not merely to encode a symbolic sequence of events, but to
record how ordered local deformations act on the evolving backbone
frame.

For example, let \(A_\alpha\) denote a localized pitch perturbation and \(B_\beta\) a localized bending perturbation. We compare the two ordered histories
\[
    \gamma_{AB} = B_\beta \circ A_\alpha,
\]
and
\[
    \gamma_{BA} = A_\alpha \circ B_\beta,
\]
where \(B_\beta \circ A_\alpha\) means that \(A_\alpha\) is applied first and \(B_\beta\) second. These two histories contain the same local events, but in opposite orders. Their endpoint conformations may be close, yet their accumulated transport may differ.

At the transport level, the representation assigns to each history a transport object
\[
    \mathcal{T}(\gamma) \in \mathcal{Z}_{\mathrm{hist}}.
\]
Order dependence is then expressed by the possibility that
\[
    \mathcal{T}(\gamma_{AB}) \neq \mathcal{T}(\gamma_{BA}),
\]
even when their endpoint representations satisfy
\[
\Delta_{\mathrm{end}}
\left(
\phi_{\mathrm{end}}\bigl(\Pi(\gamma_{AB})\bigr),
\phi_{\mathrm{end}}\bigl(\Pi(\gamma_{BA})\bigr)
\right)
\approx 0.
\]

In the formal construction developed below, admissible histories are represented by paths
\[
    \gamma : [0,1] \to \Sigma,
\]
where \(\Sigma\) is the physical quaternionic locus inside the complex deformation space \(P_{\mathbb{C}}\). Their ordered compositions generate the ordered transport algebra
\[
\Atr(\Sig).
\]
The algebraic generators \(U_\gamma\) represent admissible transport histories, and multiplication in \(\Atr(\Sig)\) records their order of composition. The commutative shadow
\[
E:\Atr(\Sig)\to C_c^\infty(\Sig)
\]
then identifies the information that remains after collapse to endpoint-level observables. Its kernel
\[
\mathcal M_{\mathrm{ord}}:=\ker E
\]
is the order-memory sector, namely the part of the representation that is lost under commutative endpoint reduction.

The problem addressed in this work can therefore be summarized as
follows: construct a representation of protein deformation histories
that preserves local geometric transport order, distinguishes histories
that endpoint descriptors may collapse, and provides response quantities
derived from this ordered transport structure. Quaternionic frame
transport supplies the local rotational encoding,
\(\Atr(\Sig)\) preserves ordered composition,
\(E\) identifies the collapse to endpoint-level information, and the
spectral-response layer extracts local response geometry from the
resulting transport representation.

\section{Formal deformation-first framework}\label{sec:constitutive-axioms}
Having introduced the biological motivation and the mathematical tools used in the construction, we now formulate the deformation-first framework. The purpose of this section is to define the structural layers of the model: the ordered deformation geometry on the physical locus
\(\Sig\subset P_{\C}\),
its infinitesimal and spectral realization, and the resulting local response theory. The central point is that the transport layer is constructed before the endpoint response layer, so that order-sensitive deformation history is not lost at the beginning of the formalization.

\begin{definition}[Connection-grounded deformation model]\label{def:ordered-deformation-model}
A \emph{connection-grounded deformation model} is a layered package
\[
\mathfrak S=(\mathfrak S_{\mathrm{def}},\mathfrak S_{\mathrm{inf}},\mathfrak S_{\mathrm{loc}})
\]
consisting of the following data.
\begin{enumerate}
\item The \emph{deformation-and-order layer}
\[
\mathfrak S_{\mathrm{def}}:=(P_{\C},\Sig,\mathcal Q,\mathscr L,\nabla^\mathscr L,\Pad(\Sig),\Gad(\Sig),\sigma,\Atr(\Sig),E);
\]

\item The \emph{infinitesimal-and-spectral layer}
\[
\mathfrak S_{\mathrm{inf}}:=(\EP,g_{\EP},\Hh,c_{\Hh},\pi,\Atr^\infty(\Sig),d_{\EP},\mathfrak D),
\]
providing the smooth infinitesimal realization of
\(\mathfrak S_{\mathrm{def}}\);

\item The \emph{local-response layer}
\[
\mathfrak S_{\mathrm{loc}}:=\Loc,
\]
together with the local spectral objects.
\end{enumerate}
\end{definition}
The three layers are linked by the deformation-first chain introduced
in Eq.~\eqref{eq:deformation-first-chain}, in which the
infinitesimal-and-spectral layer and the local-response layer are
constructed successively from the deformation-and-order layer.

\section{Connection-grounded complex deformation space and ordered transport}\label{sec:deformation-transport}
\subsection{Connection-grounded deformation geometry}

We now formalize the geometric framework underlying the transport-based deformation model. The basic geometric structure is encoded in the following connection-grounded deformation datum.

\begin{definition}[Connection-grounded deformation datum]\label{def:deformation-space}
A \emph{connection-grounded deformation datum} is a quintuple
\[
(P_{\C},\Sig,\mathcal Q,\mathscr L,\nabla^\mathscr L)
\]
satisfying the following conditions.
\begin{enumerate}
\item \(P_{\C}\) is a complex manifold.

\item \(\Sigma\subset P_\C\) is a connected smooth totally real embedded submanifold representing admissible physical transport states, called the \emph{physical quaternionic locus}.

\item \(\mathcal Q=\{(U_\alpha,\kappa_\alpha)\}_{\alpha\in A}\) is a quaternion-compatible atlas on \(\Sig\) locally modeled on admissible \(\mathfrak{su}(2)\)-connection fields and admitting holomorphic extensions to the corresponding complexified \(\mathfrak{sl}(2,\C)\)-connection fields on \(P_{\C}\).

\item \(\mathscr L\to\Sigma\) is a Hermitian line bundle endowed with a unitary connection \(\nabla^\mathscr L\), whose curvature
\[
F_{\nabla^\mathscr L}=i\,\Theta,
\qquad
\Theta\in \Omega^2(\Sig;\R)
\]
provides the first order-sensitive twisting datum.
\end{enumerate}

\end{definition}

\subsection{Ordered transport structures}

Ordered deformation histories are treated as primary geometric
objects rather than endpoint reductions. The resulting transport
structure is organized by admissible ordered paths and their
concatenations.

\begin{definition}[Admissible transport groupoid]\label{def:transport-groupoid}
An \emph{admissible ordered quaternionic path} in \(\Sig\) is an equivalence class \([\gamma]\) of piecewise \(C^1\) maps
\[
\gamma:[0,1]\to \Sig.
\]
Let \(\Pad(\Sig)\) be a chosen class of such paths, stable under constant paths, reversal, and ordered concatenation. The associated \emph{admissible transport groupoid}
\[
\Gad(\Sig)\rightrightarrows \Sig
\]
is the small groupoid defined as follows:
\begin{itemize}
\item its object space is \(\Sig\);
\item its arrow space is \(\Pad(\Sig)\);
\item for \(\gamma\in\Pad(\Sig)\), the source and target maps are
\[
 s(\gamma):=\gamma(0),\qquad t(\gamma):=\gamma(1);
\]
\item the unit at \(\xi\in\Sig\) is the constant class \(\id_\xi\);
\item the inverse of \(\gamma\) is the reversal class \(\gamma^{-1}\);
\item if \(t(\gamma_1)=s(\gamma_2)\), the product \(\gamma_2\circ\gamma_1\) is the ordered concatenation.
\end{itemize}
\end{definition}

\begin{definition}[Ordered transport algebra]\label{def:transport-algebra}
Fix a normalized unitary cocycle
\[
\sigma:\Gad(\Sig)^{(2)}\to U(1)
\]
on the admissible transport groupoid, and let
\[
\Gad(\Sig)^{(2)}_{\mathrm{ret}}
:=
\{(\gamma_2,\gamma_1)\in \Gad(\Sig)^{(2)}:\exists \xi\in \Sig\text{ such that } \gamma_2\circ\gamma_1=\id_\xi\}
\]
denote the set of composable pairs whose product is a unit.
An \emph{ordered transport algebra}
associated with \((\Gad(\Sig),\sigma)\) is an involutive algebra
\[
\Atr(\Sig)
\]
containing \(C_c^\infty(\Sig)\) as a distinguished nondegenerate commutative\(*\)-subalgebra, together with transport generators
\[
U_\gamma,\qquad \gamma\in \Gad(\Sig)\setminus \Sig.
\]
\end{definition}

\begin{definition}[Commutative shadow]\label{def:commutative-shadow}
Let \(\Atr(\Sig)\) be an ordered transport algebra in the sense of Definition~\ref{def:transport-algebra}. A \emph{commutative shadow} of \(\Atr(\Sig)\) is a linear map
\[
E:\Atr(\Sig)\to C_c^\infty(\Sig)
\]
which is a \(C_c^\infty(\Sig)\)-bimodule projection onto the coefficient algebra.
Its kernel
\[
\mathcal M_{\mathrm{ord}}:=\ker E
\]
is called the \emph{order-memory sector}. Equivalently,
\[
\Atr(\Sig)=C_c^\infty(\Sig)\oplus \mathcal M_{\mathrm{ord}}
\]
as \(C_c^\infty(\Sig)\)-bimodules.
\end{definition}

The full algebra retains order-sensitive transport memory, whereas the commutative shadow removes the purely ordered transport contribution while preserving collapsed response observables. This prepares the commutative collapse established later in the paper.

\section{Canonical geometric prototype and cocycle input for the smooth realization}
\label{sec:canonical-realization}

We now pass from the abstract deformation datum of Section~\ref{sec:deformation-transport} to a canonical smooth geometric realization on the physical locus \(\Sigma \subset P_{\mathbb C}\).
The purpose of the present section is twofold. First, we fix the canonical smooth geometric setting underlying the later transport construction. Second, we construct the canonical geometric cocycle that will serve as the twisting datum for the smooth transport realization developed in Section~\ref{sec:infinitesimal-dirac}.

\begin{definition}[Canonical geometric prototype]\label{def:canonical-prototype}
Let
\[
(P_{\C},\Sig,\mathcal Q, \mathscr L,\nabla^\mathscr L)
\]
be a connection-grounded deformation datum in the sense of Definition~\ref{def:deformation-space}, and let \(\Theta\in \Omega^2(\Sig;\mathbb R)\) be the curvature form determined by
\[
F_{\nabla^\mathscr L}= i\,\Theta.
\]
A \emph{canonical geometric prototype} consists of the above datum together with the following assumptions:
\begin{enumerate}
\item $\Sigma$ is connected, oriented, complete, and endowed with a real-analytic Riemannian metric $g$;
\item the curvature form $\Theta$ is closed and satisfies
\[
\frac{1}{2\pi}[\Theta]\in H^2(\Sig;\mathbb Z).
\]
\end{enumerate}
\end{definition}

Fix the Hermitian line bundle with unitary connection
$(\mathscr L,\nabla^\mathscr L)$
from the deformation datum, together with a good geodesic cover of \(\Sigma\) by relatively compact geodesically convex charts. The curvature form \(\Theta\) provides the canonical order-sensitive phase input, encoded through the holonomy of \(\nabla^\mathscr L\) along the physical transport core.

\begin{definition}[Canonical admissible paths]\label{def:canonical-paths}
Let $(\Sigma,g)$ be the physical transport core of a canonical geometric prototype and let
\[
\mathcal U=\{U_\alpha\}_{\alpha\in A}
\]
be the fixed good geodesic cover.
We denote by
\[
\Pad(\Sig)
\]
the class of admissible ordered quaternionic transport paths admitting piecewise geodesic representatives subordinate to the cover $\mathcal U$.
Its elements are called \emph{canonical admissible paths}.
This class is stable under constant paths, reversal, and ordered concatenation.
\end{definition}

The piecewise geodesic condition serves only to produce a canonical smooth realization of the ordered transport structure on the physical quaternionic locus.

\begin{definition}[Local geometric cocycle]\label{def:geometric-cocycle}
Let
\[
(\gamma_2,\gamma_1)\in \Gad(\Sig)^{(2)}
\]
be a composable admissible pair contained in a geodesically convex chart
\(
U\subset\Sigma.
\)
We say that $(\gamma_2,\gamma_1)$ is \emph{$U$-elementary} if there exist points $\xi,\eta,\zeta\in U$ such that $\gamma_1$ and $\gamma_2$ admit representatives given by the geodesic segments
\[
\gamma_{\xi\eta}:\xi\to\eta,
\qquad
\gamma_{\eta\zeta}:\eta\to\zeta
\]
contained in $U$.
Let
\[
\Delta_U(\xi,\eta,\zeta)\subset U
\]
denote the oriented geodesic $2$-simplex with ordered vertices $(\xi,\eta,\zeta)$.
The associated \emph{local geometric cocycle} is defined by
\[
\sigma_U(\gamma_2,\gamma_1)
:=
\exp\!\Bigl(i\int_{\Delta_U(\xi,\eta,\zeta)}\Theta\Bigr).
\]
\end{definition}

The integrality of $\Theta$ ensures that these local phase factors patch coherently under subdivision and triangulation. The cocycle identity is the holonomy identity obtained by applying Stokes' theorem to geodesic tetrahedra.

\begin{proposition}[Global geometric cocycle]\label{prop:global-geometric-cocycle}
For the canonical geometric prototype, the local geometric cocycle $\sigma_U$ extends to a normalized unitary cocycle
\[
\sigma:\Gad(\Sig)^{(2)}\to U(1)
\]
on the admissible transport groupoid, satisfying
\[
\sigma(\gamma_2,\gamma_1)=\sigma_U(\gamma_2,\gamma_1)
\]
for every $U$-elementary composable pair subordinate to a geodesically convex chart $U$ of the fixed good cover.
\end{proposition}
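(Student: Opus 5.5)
The plan is to build the global cocycle from the line bundle $(\mathscr L,\nabla^\mathscr L)$ itself, via parallel transport, rather than by patching the local formulas $\sigma_U$ directly. Integrality of $\frac{1}{2\pi}[\Theta]$ is exactly what makes a bundle with curvature $i\Theta$ exist, and the bundle is already part of the datum. For each point $\xi\in\Sigma$ we fix a unit vector $e_\xi\in\mathscr L_\xi$. Making this choice measurable, or smooth on each chart, is harmless because the final answer will be independent of it up to coboundary. For an admissible path $\gamma$ we let $P_\gamma:\mathscr L_{s(\gamma)}\to\mathscr L_{t(\gamma)}$ denote parallel transport, and define the scalar
\[
h(\gamma):=\langle P_\gamma e_{s(\gamma)},e_{t(\gamma)}\rangle\in U(1).
\]
For a composable pair $(\gamma_2,\gamma_1)$ we then set
\[
\sigma(\gamma_2,\gamma_1):=h(\gamma_2)\,h(\gamma_1)\,\overline{h(\gamma_2\circ\gamma_1)}
\]
(with an inverse if the orientation convention demands it). Because $P$ is a functor on paths, $P_{\gamma_2\circ\gamma_1}=P_{\gamma_2}P_{\gamma_1}$.

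First we must make sure $h$ is well defined on equivalence classes in $\Pad(\Sig)$. I would take the equivalence to be reparametrization together with cancellation of backtracking, and both preserve parallel transport. This is where the restriction to piecewise geodesic representatives in Definition~\ref{def:canonical-paths} is convenient. The 2-cocycle identity $\sigma(\gamma_3,\gamma_2)\sigma(\gamma_3\gamma_2,\gamma_1)=\sigma(\gamma_3,\gamma_2\gamma_1)\sigma(\gamma_2,\gamma_1)$ is then automatic, since $\sigma$ is a coboundary of a $U(1)$-valued 1-cochain on the groupoid. Normalization follows from $h(\id_\xi)=1$, which gives $\sigma(\id,\gamma)=\sigma(\gamma,\id)=1$.

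The main step, and the expected obstacle, is agreement with $\sigma_U$ on $U$-elementary pairs. Choosing $e$ arbitrarily only reproduces $\sigma_U$ up to a coboundary, so I would choose $e$ adapted to the cover. On each geodesically convex chart $U$ of the good cover, fix a base point $\xi_U$ and define $e$ on $U$ by radial parallel transport from $\xi_U$ along geodesics. Then $h(\gamma_{\xi\eta})$ is the holonomy around the geodesic triangle $(\xi_U,\xi,\eta)$. By Stokes' theorem on the convex chart, this holonomy equals $\exp(\pm i\int_{\Delta_U(\xi_U,\xi,\eta)}\Theta)$. The product defining $\sigma$ then becomes the holonomy of the boundary of the tetrahedron cone on $\Delta_U(\xi,\eta,\zeta)$, and since $d\Theta=0$ this reduces to $\exp(i\int_{\Delta_U(\xi,\eta,\zeta)}\Theta)$, as required. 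The remaining difficulty is that $e$ has to be defined globally, while points lie in several charts. There are two options. The first is to show that the value on an elementary pair does not depend on the chart used, because the triangle integral is intrinsic: the geodesic simplex is unique in the convex intersection. The second is to define $\sigma$ directly by holonomy around geodesic triangles whenever such triangles exist. Otherwise $\sigma$ is defined by subdividing the pair into elementary pieces. Well-definedness under different subdivisions then follows from the tetrahedron/Stokes identity together with the fact that $\exp(i\int_S\Theta)$ over a closed surface $S$ equals $1$, which is exactly where $\frac{1}{2\pi}[\Theta]\in H^2(\Sig;\mathbb Z)$ is used.

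I would pursue the second option and use the holonomy formula $\sigma(\gamma_2,\gamma_1)=\mathrm{Hol}_{\nabla^\mathscr L}(\partial\text{-loop})^{-1}$. Here the loop is $\gamma_2\gamma_1$ followed by a canonical return path, namely the broken geodesic through the chart centers. With this choice, the cocycle property, the normalization, and compatibility with $\sigma_U$ each reduce to holonomy multiplicativity and Stokes' theorem on convex simplices.
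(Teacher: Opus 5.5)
There is a genuine gap: neither of your two definitions can reproduce $\sigma_U$.

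In the first option the fibres of $\mathscr L$ are one-dimensional and parallel transport is a functor. Hence $P_\gamma e_{s(\gamma)}=h(\gamma)\,e_{t(\gamma)}$, and so $h(\gamma_2\circ\gamma_1)=h(\gamma_2)\,h(\gamma_1)$. Thus $h$ is a homomorphism $\Gad(\Sig)\to U(1)$, and your $\sigma=\delta h$ is identically $1$ for \emph{every} choice of $e$. Changing $e$ only alters $h$ by a $0$-coboundary, which $\delta$ kills, so adapting $e$ to the cover cannot help. Your tetrahedron computation produces $\exp(i\int_{\Delta_U(\xi,\eta,\zeta)}\Theta)$ only because it evaluates the third factor on the geodesic chord from $\xi$ to $\zeta$. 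Your definition instead uses the concatenation $\gamma_{\eta\zeta}\circ\gamma_{\xi\eta}$, which is a different arrow.

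The definition you finally adopt is the inverse holonomy of $\gamma_2\circ\gamma_1$ closed up by a canonical return path, so it depends on the pair only through $\gamma_2\circ\gamma_1$. But then $\sigma(\id_{t(\gamma)},\gamma)$ with $\gamma=\gamma_2\circ\gamma_1$ is the same inverse holonomy. Normalization therefore forces $\sigma\equiv 1$, which contradicts agreement with $\sigma_U$ on any elementary triangle with $\int_{\Delta_U}\Theta\notin 2\pi\mathbb Z$. Your claimed reductions to ``holonomy multiplicativity and Stokes'' cannot go through.

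The missing idea is that the $1$-cochain must \emph{fail} to be multiplicative, by seeing only endpoints. For $\xi,\zeta$ in a common chart of the cover, let $[\xi,\zeta]$ be the geodesic chord joining them. For strongly convex charts this is the unique minimizing geodesic, hence chart-independent. Put $c(\xi,\zeta):=\langle P_{[\xi,\zeta]}e_\xi,e_\zeta\rangle$, and let $c$ be arbitrary, say $1$, on all other pairs. For $\gamma_1:\xi\to\eta$ and $\gamma_2:\eta\to\zeta$ set $\sigma(\gamma_2,\gamma_1):=\overline{c(\xi,\eta)\,c(\eta,\zeta)}\,c(\xi,\zeta)$.

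This is the coboundary of a cochain on the pair groupoid $\Sig\times\Sig$, pulled back along the functor $(s,t)$. Normalization and the cocycle identity are therefore automatic for any path equivalence. Your backtracking cancellation is still needed, since without it $\Gad(\Sig)$ is not a groupoid at all. On a $U$-elementary pair, $\sigma$ equals $\overline{\mathrm{Hol}(\partial\Delta_U(\xi,\eta,\zeta))}=\exp(i\int_{\Delta_U(\xi,\eta,\zeta)}\Theta)=\sigma_U(\gamma_2,\gamma_1)$ by Stokes in the chart, up to the holonomy sign convention. No patching of $e$ and no separate integrality argument is needed.

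The paper takes your second route. It multiplies local phases over a geodesic triangulation of a filling of the concatenation, and uses $\frac{1}{2\pi}[\Theta]\in H^2(\Sig;\mathbb Z)$ for independence of the filling. It then gets the cocycle identity from ``two triangulations of the same filling''. That last step is exactly the tetrahedron identity for the endpoint triangles encoded by $c$. Read as a filling of the path $\gamma_2\circ\gamma_1$ itself, it meets the same normalization obstruction as your final definition.
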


\begin{remark}
The present section constructs the geometric phase cocycle
\[
\sigma:\Gad(\Sig)^{(2)}\to U(1),
\]
but not yet the smooth ordered transport algebra.
The latter requires localization from individual admissible transport classes to \emph{localized smooth families}, realized in the present framework by admissible local bisections. Consequently, the cocycle constructed here serves as the twisting datum for the transport algebra introduced in Section~\ref{sec:infinitesimal-dirac}.
\end{remark}
\section{Infinitesimal transport and the global Dirac operator}\label{sec:infinitesimal-dirac}

The geometric cocycle constructed in Section~\ref{sec:canonical-realization} now serves as the twisting datum for the smooth realization of the deformation-first transport chain. The purpose of the present section is to pass from admissible ordered transport to its infinitesimal, metric-Clifford, and Dirac realizations on the physical locus \(\Sigma.\)

\subsection{Canonical infinitesimal module}

\begin{definition}[Canonical infinitesimal module]\label{def:canonical-EP}
Let
\[
(P_{\mathbb C},\Sigma,\mathcal Q,\mathscr L,\nabla^\mathscr L,g,\Theta)
\]
be a canonical geometric prototype.
The associated \emph{canonical infinitesimal module} is
\[
\mathcal E_P:=\Gamma(T\Sigma).
\]
Its anchor map is the identity
\[
\rho_{\EP}:\EP\to \Gamma(T\Sig),
\qquad
\rho_{\EP}(X):=X.
\]
For each $\xi\in \Sig$, one has
\[
\{\rho_{\EP}(X)(\xi):X\in \EP\}=T_\xi\Sig,
\]
so \(\EP\) realizes the full space of admissible infinitesimal directions. The Riemannian metric \(g\) induces a positive-definite symmetric \(C^\infty(\Sigma)\)-bilinear form on \(\EP\).
\end{definition}

\subsection{Canonical metric-Clifford realization}
Fix a chosen spin$^c$ structure on $\Sigma$, with spinor bundle
\[
S\to \Sig.
\]
The Hermitian line bundle $\mathscr L\to \Sig$ already fixed in the connection-grounded deformation datum serves as the twisting bundle, and $|\Lambda|^{1/2}$ denotes the half-density bundle on $\Sig$. Define
\[
\mathcal V:=S\otimes \mathscr L\otimes |\Lambda|^{1/2}.
\]

\begin{definition}[Canonical metric-Clifford datum]\label{def:canonical-clifford}
Let
\[
\EP=\Gamma(T\Sig)
\]
be the canonical infinitesimal module of Definition~\ref{def:canonical-EP}.
The Riemannian metric $g$ induces the positive-definite symmetric
$C^\infty(\Sig)$-bilinear form
\[
g_{\EP}:\EP\times\EP\to C^\infty(\Sig),
\qquad
g_{\EP}(X,Y)(\xi):=g_\xi(X(\xi),Y(\xi)).
\]
Define the Hilbert space
\[
\Hh:=\mathscr L^2(\Sig,\mathcal V).
\]
Let
\[
\operatorname{Cl}(T\Sig,g)
\]
be the Clifford bundle associated with $(T\Sig,g)$. Fiberwise Clifford multiplication on $S$, extended trivially on $\mathscr L\otimes |\Lambda|^{1/2}$, induces the bounded multiplication representation
\[
c_{\Hh}:\Gamma_c\bigl(\operatorname{Cl}(T\Sig,g)\bigr)\to \mathcal B(\Hh),
\qquad
c_{\Hh}(a)\psi:=a\cdot \psi.
\]
The quadruple
\[
(\EP,g_{\EP},\Hh,c_{\Hh})
\]
is called the canonical metric-Clifford datum.
\end{definition}

This realizes the infinitesimal transport structure as a canonical metric-Clifford datum on the Hilbert space
\(
\Hh.
\)

\subsection{Canonical smooth transport representation}
\begin{definition}[Admissible local bisection]\label{def:admissible-local-bisection}
An \emph{admissible local bisection} on $\Sig$ is a pair
\[
b=(U,\beta),
\]
where $U\subset \Sig$ is open and
\[
\beta:U\to \beta(U)\subset \Sig
\]
is a smooth diffeomorphism onto its image, such that every \(x\in U\) admits an open neighborhood
\[
U_x\subset U
\]
and an element $V_x$ of the fixed good geodesic cover of $\Sig$ satisfying
\[
y,\beta(y)\in V_x,
\qquad y\in U_x .
\]
and the unique geodesic segment in $V_x$ joining $y$ to $\beta(y)$ defines the canonical admissible transport class from $y$ to $\beta(y)$. Smooth geodesic connector data of local diffeomorphisms $\beta$ thus locally induce \[y\longmapsto b(y)\in \Gad(\Sig),\] with associated transport symbol \[U_b.\]
\end{definition}

\begin{definition}[Transport operator associated with an admissible local bisection]\label{def:transport-operator-b}
For each \(x\in U\), let
\[
\gamma_{b,x}:x\to \beta(x)
\]
denote the corresponding canonical admissible connector. Let
\[
\tau^{S\otimes \mathscr L}_{b,x}:(S\otimes \mathscr L)_x\longrightarrow (S\otimes \mathscr L)_{\beta(x)}
\]
be the parallel transport along a canonical representative of \(\gamma_{b,x}\) for the tensor-product connection on \(S\otimes \mathscr L\), and let
\[
(\beta_*^{1/2})_x:|\Lambda|^{1/2}_x\longrightarrow |\Lambda|^{1/2}_{\beta(x)}
\]
denote the canonical transport of half-densities induced by the diffeomorphism \(\beta\). Their tensor product defines a fiberwise unitary map
\[
\widetilde\tau_{b,x}:=\tau^{S\otimes \mathscr L}_{b,x}\otimes (\beta_*^{1/2})_x:
\mathcal V_x\longrightarrow \mathcal V_{\beta(x)}.
\]
For
\[
\psi\in C_c^\infty(\Sig,\mathcal V),
\]
the associated transport operator
\[
V_b:C_c^\infty(\Sig,\mathcal V)\to C_c^\infty(\Sig,\mathcal V)
\]
is defined by
\[ (V_b\psi)(y):= \begin{cases} \widetilde\tau_{b,\beta^{-1}(y)}\bigl(\psi(\beta^{-1}(y))\bigr), & y\in \beta(U),\\[4pt] 0, & y\notin \beta(U). \end{cases} \]
\end{definition}

\begin{definition}[Canonical smooth transport core]\label{def:canonical-smooth-core}
The \emph{canonical smooth transport core} is the complex vector space generated by the symbols
\[
U_b f,
\]
where \(b=(U,\beta)\) is an admissible local bisection and \(f\in C_c^\infty(U)\).
The restriction relations
\[
U_{(U,\beta)}f=U_{(U',\beta|_{U'})}f
\]
whenever \(U'\subset U\) is open and \(\operatorname{supp}(f)\subset U'\), and the identity relations
\[
U_{(U,\mathrm{id}_U)}f=\widetilde f,
\]
where \(\widetilde f\in C_c^\infty(\Sig)\) denotes the zero extension of \(f\).
We denote this quotient by
\[
\Atr^\infty(\Sig).
\]
\end{definition}

\begin{proposition}\label{prop:canonical-smooth-representation}
Let $M_{\widetilde f}$ be the corresponding multiplication operator on $\Hh$, and set on generators
\[
\pi_{\mathrm{can}}(U_b f):=V_b M_{\widetilde f}.
\]
Then this assignment descends to a well-defined nondegenerate $*$-representation
\[
\pi_{\mathrm{can}}:\Atr^\infty(\Sig)\to \mathcal B(\Hh).
\]
Moreover, each $V_b$ is a partial isometry on $\Hh$, and the dense subspace
\[
C_c^\infty(\Sig,\mathcal V)
\]
is invariant under
\[
\pi_{\mathrm{can}}\bigl(\Atr^\infty(\Sig)\bigr).
\]
\end{proposition}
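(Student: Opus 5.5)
The argument proceeds in four steps: unitarity of the fiber maps, compatibility with the defining relations of \(\Atr^\infty(\Sig)\), the algebraic structure (products and adjoints), and finally nondegeneracy and invariance of the smooth core.

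\emph{Step 1: each \(V_b\) is a partial isometry.} Fix \(b=(U,\beta)\). The map \(\tau^{S\otimes\mathscr L}_{b,x}\) is parallel transport for a metric (unitary) connection, so it is unitary on fibers. For half-densities, the pushforward \((\beta^{1/2}_*)_x\) is chosen exactly so that \(|\psi|^2\), viewed as a density, pushes forward to a density. The change-of-variables formula then gives
\[
\int_{\beta(U)}|(V_b\psi)(y)|^2
=\int_U|\psi(x)|^2 .
\]
Hence \(V_b\) is isometric on \(L^2(U,\mathcal V)\) and vanishes on its orthogonal complement, so \(V_b\) extends to a bounded partial isometry. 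Its initial projection is \(M_{\mathbf 1_U}\) and its final projection is \(M_{\mathbf 1_{\beta(U)}}\). The adjoint is the transport operator of the inverse local bisection \(b^{-1}=(\beta(U),\beta^{-1})\), built with inverse parallel transport along the reversed geodesic connector. This uses that the reversal of the unique geodesic in \(V_x\) is again the canonical connector.

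\emph{Step 2: well-definedness on the quotient.} Define \(\pi_{\mathrm{can}}\) on the free vector space of symbols \(U_bf\) and check that it kills both families of defining relations.
\begin{itemize}
\item For the restriction relations: if \(\operatorname{supp} f\subset U'\), then \(V_{(U,\beta)}M_{\widetilde f}=V_{(U',\beta|_{U'})}M_{\widetilde f}\), because both operators only see values on \(\operatorname{supp} f\). The connectors agree there, since they are determined pointwise by \(y\mapsto\beta(y)\) through uniqueness of geodesics in convex charts.
\item For the identity relations: when \(\beta=\id_U\), the connector is the constant path, so parallel transport is the identity. The half-density transport is also trivial, so \(V_{(U,\id)}M_{\widetilde f}=M_{\widetilde f}\).
\end{itemize}

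\emph{Step 3: the \(*\)-representation property.} This is the main obstacle, because the statement only defines \(\pi_{\mathrm{can}}\) on generators, while the algebra structure on \(\Atr^\infty(\Sig)\) is implicit. I would take the product and involution on generators to be the ones forced by the operator formulas.
\begin{itemize}
\item Since \(M_{\widetilde f}V_b=V_bM_{\widetilde f\circ\beta}\), we get \((U_{b_2}f_2)(U_{b_1}f_1)=U_{b_2\circ b_1}\bigl((f_2\circ\beta_1)f_1\bigr)\), decorated by the phase \(\sigma\) from Proposition~\ref{prop:global-geometric-cocycle}. This phase is \(\exp(i\int\Theta)\) over the geodesic triangle \((x,\beta_1x,\beta_2\beta_1x)\).
\item The key computation is that \(V_{b_2}V_{b_1}\) and \(V_{b_2\circ b_1}\) differ pointwise by the holonomy of \(\nabla^{\mathscr L}\) around the geodesic triangle. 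By Stokes' theorem this holonomy equals \(\sigma\), up to the spin-connection contribution. That contribution must be absorbed, either into the convention that the composite symbol uses the concatenated connector or into the definition of \(b_2\circ b_1\).
\item I expect this patching to be the delicate point, in particular showing that products of admissible bisections are again admissible after shrinking domains via the restriction relations. I would handle it by working with the concatenated-connector convention, so that \(V_{b_2}V_{b_1}=V_{b_2\circ b_1}\) holds exactly. The cocycle then appears only as the curvature comparison with the geodesic connector.
\item The involution \((U_bf)^*=U_{b^{-1}}(\bar f\circ\beta^{-1})\) then matches \((V_bM_{\widetilde f})^*=M_{\bar{\widetilde f}}V_{b^{-1}}=V_{b^{-1}}M_{\overline{\widetilde f}\circ\beta^{-1}}\), using Step 1.
\end{itemize}

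\emph{Step 4: nondegeneracy and invariance.}
\begin{itemize}
\item For nondegeneracy: the image contains \(M_{\widetilde f}\) for all \(f\in C_c^\infty(\Sig)\), via the identity relations. Since \(\{\widetilde f\psi\}\) is dense in \(\Hh\) (take an approximate unit of bump functions), \(\pi_{\mathrm{can}}(\Atr^\infty)\Hh\) is dense.
\item For invariance: \(M_{\widetilde f}\) preserves \(C_c^\infty(\Sig,\mathcal V)\). The operator \(V_b\) applied to a section supported in the compact set \(\operatorname{supp}f\subset U\) yields a section supported in the compact set \(\beta(\operatorname{supp}f)\). This section is smooth because \(\beta^{-1}\) is a diffeomorphism and the transport \(\widetilde\tau_{b,x}\) depends smoothly on \(x\), since the geodesic connectors depend smoothly on their endpoints inside convex charts.

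Both properties therefore follow directly.
\end{itemize}
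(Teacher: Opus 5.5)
Your Steps 1, 2 and 4 match the paper's proof: unitary fiber maps with the half-density change of variables, the restriction and identity relations, nondegeneracy via $M_f$, and invariance of the smooth core. Your remark that $V_bM_{\widetilde f}$ with $f\in C_c^\infty(U)$, and not $V_b$ alone, is what preserves $C_c^\infty(\Sig,\mathcal V)$ is in fact more careful than the paper.

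The gap is in Step 3. The product on $\Atr^\infty(\Sig)$ is not left open. The paper fixes it, in its appendix on the transport core, as the $\sigma$-twisted product
\[
(U_{b_2}f_2)(U_{b_1}f_1)=U_{b_2\circ b_1}\bigl(\sigma[b_2,b_1]\,(f_2\circ\beta_1)\,f_1\bigr),
\qquad b_2\circ b_1=(U_{21},\beta_2\circ\beta_1).
\]
Here $b_2\circ b_1$ is meant as an admissible bisection with its own geodesic connector, and the involution is the same as yours. Multiplicativity therefore amounts to the fiberwise identity $\widetilde\tau_{b_2,\beta_1(x)}\widetilde\tau_{b_1,x}=\sigma[b_2,b_1](x)\,\widetilde\tau_{b_2\circ b_1,x}$.

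Your concatenated-connector convention gives $V_{b_2}V_{b_1}=V_{b_2\circ b_1}$ with no phase. That is multiplicativity for the \emph{untwisted} product. Against the $\sigma$-decorated product you state in your first bullet, $\pi_{\mathrm{can}}$ would be off by exactly $M_{\sigma[b_2,b_1]}$. So your bullets are mutually inconsistent, and the cocycle twist, which is the point of this layer, is lost.

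There is a second problem. A composite carrying a two-segment connector is not an admissible local bisection, so $U_{b_2\circ b_1}$ is not among the generators. Restricting domains cannot repair this, because whether $y$ and $\beta_2\beta_1(y)$ lie in a common chart is a pointwise condition. For example, on $\R$ with the cover $\{(k-1,k+1)\}_{k\in\mathbb Z}$, translations by $0.9$ defined near $0$ and near $0.9$ are both admissible, but their composite sends $0$ to $1.8$ and no chart contains both points. Taking the product to be the one forced by the operator formulas makes multiplicativity hold by fiat, but it only moves these problems into well-definedness and closure of the product.

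Your diagnosis of the obstruction is nonetheless correct. It pinpoints a step the paper's proof only asserts (``by the cocycle identity encoded by the geometric phase construction''). The defect $\widetilde\tau_{b_2\circ b_1,x}^{-1}\widetilde\tau_{b_2,\beta_1(x)}\widetilde\tau_{b_1,x}$ is the holonomy of $S\otimes\mathscr L$ around the geodesic triangle $(x,\beta_1(x),\beta_2\beta_1(x))$; the half-density factor composes exactly. Only the $\mathscr L$-factor gives the scalar $\exp(\pm i\int_\Delta\Theta)$, i.e.\ $\sigma[b_2,b_1](x)$ up to an orientation convention. The spin$^c$ factor is in general a non-scalar unitary of $S_x$: on the unit round $S^2$ it acts by the opposite phases $e^{\pm i\,\mathrm{Area}(\Delta)/2}$ on $S^\pm$. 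The paper also never checks admissibility of $b_2\circ b_1$.

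So neither argument closes Step 3 as stated. A genuine proof needs two additional inputs. First, a way to dispose of the spin$^c$ holonomy, for instance:
\begin{itemize}
\item transport only $\mathscr L\otimes|\Lambda|^{1/2}$ along connectors and move the $S$-fibers by a composition-compatible identification;
\item or assume the spin$^c$ holonomy of elementary triangles is trivial;
\item or enlarge the coefficients to $\mathrm{End}(S)$-valued functions that absorb it.
\end{itemize}
Second, a class of connector data, or a refinement of the cover, under which admissible bisections are closed under composition.
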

\subsection{Canonical global Dirac realization}
\begin{definition}[Canonical global Dirac operator]\label{def:canonical-global-dirac}
Let
\[
\nabla^{\mathcal V}:C_c^\infty(\Sig,\mathcal V)\to C_c^\infty(\Sig,T^*\Sig\otimes \mathcal V)
\]
be the tensor-product connection on
\[
\mathcal V=S\otimes \mathscr L\otimes |\Lambda|^{1/2}.
\]
Let
\[
c_0:T^*\Sig\otimes \mathcal V\to \mathcal V
\]
denote fiberwise Clifford multiplication. The \emph{canonical global Dirac operator} is the densely defined first-order differential operator
\[
\mathfrak D_0:C_c^\infty(\Sig,\mathcal V)\subset \Hh\to \Hh,
\qquad
\mathfrak D_0\psi:=c_0\bigl(\nabla^{\mathcal V}\psi\bigr).
\]
Equivalently, if \((e_j)\) is a local \(g\)-orthonormal frame with dual coframe \((e^j)\), then
\[
\mathfrak D_0\psi=\sum_j c_0(e^j)\,\nabla^{\mathcal V}_{e_j}\psi.
\]
\end{definition}
Via the Riemannian metric $g$, we identify the differential of a function with an element of the
infinitesimal transport module by
\[
d_{\EP}f:=(df)^\sharp_g\in \EP.
\]
\begin{theorem}[Essential self-adjointness of the canonical Dirac operator]\label{thm:selfadjoint-dirac}
Let
\[
\mathfrak D_0:=c_0\circ \nabla^{\mathcal V}
\]
be the canonical global Dirac operator. Then
\(\mathfrak D_0\) is symmetric and essentially self-adjoint on
\[
C_c^\infty(\Sig,\mathcal V)\subset \Hh.
\]
Its closure
\[
\mathfrak D:\Dom(\mathfrak D)\subset \Hh\to \Hh,
\]
is the unique self-adjoint realization of the first-order Dirac-type differential operator.

Moreover, for every \(f\in C_c^\infty(\Sig),\) viewed through the identity bisection as an element of
\(\Atr^\infty(\Sig)\), the core \(C_c^\infty(\Sig,\mathcal V)\) is invariant under \(\pi_{\mathrm{can}}(f)=M_f\),
and one has on this core
\[
[\mathfrak D,\pi_{\mathrm{can}}(f)]\psi = c_0(df)\psi,
\qquad
\psi\in C_c^\infty(\Sig,\mathcal V).
\]
Equivalently, under the metric identification
\[
d_{\EP}f:=(df)^\sharp_g\in \EP,
\]
one has on \(C_c^\infty(\Sig,\mathcal V)\)
\[
[\mathfrak D,\pi_{\mathrm{can}}(f)] = c_{\Hh}(d_{\EP}f).
\]
In particular, the commutator with a coefficient element is represented by the bounded bundle endomorphism
of fiberwise Clifford multiplication by \(df\).
\end{theorem}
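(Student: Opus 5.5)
The plan is to treat $\mathfrak D_0$ as a standard twisted spin$^c$ Dirac operator on the complete Riemannian manifold $(\Sig,g)$ and follow the Wolf--Chernoff route. The argument has three parts, taken in order: symmetry, essential self-adjointness, and the commutator identity.

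\emph{Symmetry.} First I will fix the connection on $\mathcal V=S\otimes\mathscr L\otimes|\Lambda|^{1/2}$ as the tensor product of three pieces:
\begin{itemize}
\item the spin$^c$ connection on $S$, which is compatible with the metric and with Clifford multiplication;
\item the unitary connection $\nabla^{\mathscr L}$;
\item the canonical flat connection on half-densities (Levi-Civita induced).
\end{itemize}
With this choice $\nabla^{\mathcal V}$ is Hermitian, and $c_0(e^j)$ is skew-adjoint and parallel. The $\mathscr L^2$ pairing on $\mathcal V$ involves no further volume choice, because the half-density factors pair to a density. For $\psi,\phi\in C_c^\infty(\Sig,\mathcal V)$ I will compute pointwise
\[
\langle \mathfrak D_0\psi,\phi\rangle-\langle\psi,\mathfrak D_0\phi\rangle=\operatorname{div}(Y),
\]
where $Y$ is the compactly supported vector field (density-valued) with $g(Y,X)=\langle c_0(X^\flat)\psi,\phi\rangle$. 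Integrating and applying the divergence theorem, the right-hand side vanishes, so $\mathfrak D_0$ is symmetric.

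\emph{Commutator identity.} This step is routine and I will do it next. For $f\in C_c^\infty(\Sig)$, multiplication by $f$ preserves compactly supported smooth sections, so the core is invariant under $M_f=\pi_{\mathrm{can}}(f)$; this is consistent with Proposition~\ref{prop:canonical-smooth-representation} via the identity relation $U_{(U,\mathrm{id})}f=\widetilde f$. The Leibniz rule $\nabla^{\mathcal V}(f\psi)=df\otimes\psi+f\nabla^{\mathcal V}\psi$ then gives $[\mathfrak D_0,M_f]\psi=c_0(df)\psi$. Under $d_{\EP}f=(df)^\sharp_g$, Clifford multiplication by the covector equals $c_{\Hh}$ of the corresponding vector, since the Clifford bundle is identified via $g$. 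The operator $c_0(df)$ is bounded because $df$ is compactly supported and continuous, with $|c_0(df)|=|df|_g$.

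\emph{Essential self-adjointness.} This is the main step, and it is where completeness of $g$ from Definition~\ref{def:canonical-prototype} enters. I will use Chernoff's criterion or the equivalent cutoff argument. Completeness provides exhausting cutoffs $\chi_n\in C_c^\infty(\Sig)$ with $0\le\chi_n\le1$, $\chi_n\to1$ pointwise, and $\|d\chi_n\|_\infty\le 1/n$; these come from composing a smooth approximation of the distance function $d(\xi_0,\cdot)$ with a fixed ramp, the ball closures being compact by Hopf--Rinow.

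Take $\psi\in\Dom(\mathfrak D_0^*)$. By local elliptic regularity for the first-order elliptic operator $\mathfrak D_0$ (symbol $c_0(\xi)$ with $c_0(\xi)^2=-|\xi|^2$), $\psi$ lies in $H^1_{\mathrm{loc}}$. Hence $\chi_n\psi$ lies in the closure domain, by Friedrichs mollification in charts. Moreover
\[
\mathfrak D(\chi_n\psi)=\chi_n\mathfrak D_0^*\psi+c_0(d\chi_n)\psi .
\]
The second term has norm at most $\|\psi\|/n\to0$, so $\chi_n\psi\to\psi$ in the graph norm, and therefore $\mathfrak D_0^*=\overline{\mathfrak D_0}$. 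Equivalently, $\ker(\mathfrak D_0^*\pm i)=0$, which I would also record as a check.

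The delicate point in this step is the local regularity and mollification argument showing that compactly supported elements of $\Dom(\mathfrak D_0^*)$ lie in $\Dom(\overline{\mathfrak D_0})$. I will handle it with Friedrichs' lemma, noting that the commutator of mollification with a first-order operator with smooth coefficients is uniformly bounded.

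Uniqueness of the self-adjoint realization then follows. Every self-adjoint extension $T$ satisfies $\overline{\mathfrak D_0}\subset T\subset \mathfrak D_0^*$, and these two ends coincide, so $T=\mathfrak D$.
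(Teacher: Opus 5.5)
Your proposal is correct and follows the same route as the paper's proof: symmetry from the Green/divergence formula for compactly supported sections, essential self-adjointness from completeness of $(\Sigma,g)$, and the commutator identity $[\mathfrak D,M_f]=c_0(df)$ from the Leibniz rule for the Clifford connection, with boundedness coming from the compact support of $df$. The only difference is that you prove the completeness criterion yourself, using cutoffs with $\|d\chi_n\|_\infty\le 1/n$ together with Friedrichs mollification, whereas the paper simply cites the standard theorem for Dirac-type operators on complete Riemannian manifolds.
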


The infinitesimal, metric-Clifford, and Dirac stages of the deformation-first transport chain are therefore canonically realized on the physical locus \(\Sigma.\)

\section{Intrinsic local spectral response of the canonical Dirac datum}\label{sec:localization-response}

Section~\ref{sec:infinitesimal-dirac} produced the canonical global Dirac datum
\[
(\Atr^\infty(\Sig),\pi_{\mathrm{can}},\Hh,\mathfrak D),
\]
with underlying first-order differential expression
\[
\mathfrak D_0=c_0\circ \nabla^{\mathcal V}.
\]
The present section extracts from this global datum an intrinsic local spectral response structure on the physical locus \(\Sig.\)
Using the tangent-groupoid deformation of \(\mathfrak D_0,\)
we define the intrinsic local spectral germ \(\mathfrak L_\xi(\mathfrak D)\), together with its tangent representative \(D(\xi).\)
We then extend the resulting local spectral structure holomorphically to the complex deformation space and construct the associated renormalized local spectral density and mixed response form.

\subsection{Intrinsic local spectral germ}

Let
\[
\mathbb T\Sig
:=
T\Sig\times\{0\}
\sqcup
\Sig\times\Sig\times(0,1]
\]
be Connes' tangent groupoid of \(\Sig\), viewed as the standard smooth deformation from the pair groupoid \(\Sig\times\Sig\) to the tangent bundle \(T\Sig\).
The first-order differential expression \(\mathfrak D_0\) canonically determines an adiabatic family \(\mathfrak D^{\mathbb T}\) over \(\mathbb T\Sig\): for \(t>0\) the fiber operator is \(t\mathfrak D_0\), while at \(t=0\) over each tangent fiber \(T_\xi\Sig\) one obtains the normal operator of \(\mathfrak D_0\) at \(\xi\).

\begin{definition}[Intrinsic local spectral germ and canonical localization]\label{def:canonical-localization}
Let \(\mathfrak D^{\mathbb T}\) denote the tangent-groupoid deformation of \(\mathfrak D_0.\)
For each \(\xi\in\Sig\), the fiber of \(\mathfrak D^{\mathbb T}\) at \(t=0\) over the tangent space
\(T_\xi\Sig\) is a constant-coefficient Dirac-type operator
\[
\mathcal N_\xi(\mathfrak D_0):
C_c^\infty(T_\xi\Sig,\mathcal V_\xi)\to C_c^\infty(T_\xi\Sig,\mathcal V_\xi),
\]
called the \emph{normal operator} of \(\mathfrak D_0\) at \(\xi\).

The germ of \(\mathfrak D^{\mathbb T}\) along the tangent fiber \(T_\xi\Sig\times\{0\}\) is denoted by
\[
\mathfrak L_\xi(\mathfrak D)
\]
and is called the \emph{intrinsic local spectral germ} of the canonical Dirac datum at \(\xi\).

The associated \emph{canonical localized operator} is
\[
D(\xi):=\Loc_\xi(\mathfrak D):=\mathcal N_\xi(\mathfrak D_0),
\qquad \xi\in\Sig.
\]
\end{definition}

\begin{remark}[Normal operator versus full local germ]\label{rem:normal-vs-germ}
The tangent Dirac operator
\[
D(\xi)=\Loc_\xi(\mathfrak D)=\mathcal N_\xi(\mathfrak D_0),
\]
is only a representative of the intrinsic local germ
\[
\mathfrak L_\xi(\mathfrak D).
\]
While \(D(\xi)\) captures the first-order tangent model, the full germ retains the adiabatic local information of \(\mathfrak D^{\mathbb T}\). All local response quantities considered later are extracted from the full spectral germ and its induced heat-kernel data, not from the isolated constant-coefficient operator taken by itself.
\end{remark}

\subsection{Canonical invariance of the local spectral germ}

We now show that the intrinsic local spectral germ is independent of auxiliary choices, while its tangent representative is canonical up to unitary equivalence induced by the choice of normal coordinates and compatible local trivializations.

\begin{theorem}[Intrinsic well-definedness]\label{thm:local-germ-invariance}
For every \(\xi\in \Sig\), the intrinsic local spectral germ
\[
\mathfrak L_\xi(\mathfrak D)
\]
is well defined independently of all auxiliary choices. Any concrete tangent-fiber representative
\[
D(\xi)=\mathcal N_\xi(\mathfrak D_0)
\]
is determined canonically up to unitary equivalence on
\[
L^2(T_\xi \Sig,\mathcal V_\xi).
\]
Moreover, every such representative is symmetric and essentially self-adjoint on
\[
C_c^\infty(T_\xi \Sig,\mathcal V_\xi)\subset L^2(T_\xi \Sig,\mathcal V_\xi).
\]
In particular, the real localization stage of the deformation-first chain is canonically
realized by the family of unitary-equivalence classes
\[
\xi\longmapsto [D(\xi)],
\qquad \xi\in \Sig.
\]
\end{theorem}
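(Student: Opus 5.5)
The argument has three steps: show the germ is intrinsic, show the tangent representative is canonical up to unitary equivalence, and show it is essentially self-adjoint.

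\emph{Step 1: the germ is intrinsic.} The tangent groupoid \(\mathbb T\Sig\) has a smooth structure determined by \(\Sig\) alone; no metric or coordinates enter. Its unit space is \(\Sig\times[0,1]\), and \(T_\xi\Sig\times\{0\}\) is the source fiber over \((\xi,0)\). The adiabatic family \(\mathfrak D^{\mathbb T}\) is obtained from \(\mathfrak D_0\) alone. In local coordinates, the operator \(\mathfrak D_0=\sum_j c_0(e^j)\nabla^{\mathcal V}_{e_j}\) is transported to \(\mathbb T\Sig\) by the rescaling \((x,v,t)\mapsto(x,x+tv)\) and multiplied by \(t\).

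To check that this is well defined, I would use the standard fact that the smooth structure of \(\mathbb T\Sig\) is invariant under change of charts. Consequently a differential operator of order \(\le1\) lifts to a smooth family, and its limit as \(t\to0\) is the principal part frozen at \(\xi\). Concretely, if \(\phi\) is a change of coordinates, the rescaled operators \(t\,\phi^*\mathfrak D_0\) differ from \(t\mathfrak D_0\) by terms that are \(O(t)\) in the rescaled variable. This gives independence of the germ along \(T_\xi\Sig\times\{0\}\) from the auxiliary choices: charts, the local frame \((e_j)\), and local trivializations of \(\mathscr L\) and of \(|\Lambda|^{1/2}\). The trivializations change only zeroth-order terms, which carry an extra factor of \(t\) and therefore vanish at \(t=0\).

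\emph{Step 2: the representative is canonical up to unitary equivalence.} I would compute \(\mathcal N_\xi(\mathfrak D_0)\) explicitly in the linear coordinates \(v\) on \(T_\xi\Sig\):
\[
\mathcal N_\xi(\mathfrak D_0)=\sum_j c_0(e^j(\xi))\,\partial_{v_j},
\]
acting on \(\mathcal V_\xi\)-valued functions. The connection and zeroth-order terms drop out in the limit \(t\to0\). A different orthonormal frame at \(\xi\) changes this expression by an element of \(O(T_\xi\Sig,g_\xi)\), lifted to \(\mathcal V_\xi\) via the spin\(^c\) structure. The linear change of variables on \(T_\xi\Sig\), together with the corresponding unitary on \(\mathcal V_\xi\), then gives a unitary on \(L^2(T_\xi\Sig,\mathcal V_\xi)\) conjugating one representative into the other. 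This is the invariance of the Clifford symbol. A change of fiber trivialization acts by a constant unitary on \(\mathcal V_\xi\). Hence the class \([D(\xi)]\) is canonical.

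\emph{Step 3: symmetry and essential self-adjointness.} The operator \(D(\xi)\) has constant coefficients on a Euclidean space, and \(c_0(e^j)\) is skew-adjoint with \((c_0(e^j)\partial_j)^*=c_0(e^j)\partial_j\) after integration by parts. Symmetry therefore follows. For essential self-adjointness I would apply the Fourier transform, which conjugates \(D(\xi)\) into the multiplication operator by the Hermitian matrix-valued function \(\hat v\mapsto i\sum_j \hat v_j c_0(e^j)\). Multiplication operators of this kind are self-adjoint on their maximal domain, and \(C_c^\infty\) is a core. Alternatively, one can note that \(T_\xi\Sig\) is complete with finite propagation speed, so the Chernoff argument already used for Theorem~\ref{thm:selfadjoint-dirac} applies.

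\emph{Main obstacle.} The hard part is Step 1: making precise what "germ along the fiber" means and proving chart-independence of the full germ, not just of the \(t=0\) fiber. I would handle this by defining \(\mathfrak L_\xi(\mathfrak D)\) as the equivalence class of restrictions of \(\mathfrak D^{\mathbb T}\) to neighborhoods of \(T_\xi\Sig\times\{0\}\) in \(\mathbb T\Sig\). Independence then reduces to showing that the lift of \(\mathfrak D_0\) to \(\mathbb T\Sig\) is an intrinsically defined smooth family, which I would cite from the tangent-groupoid literature (Connes; Debord–Skandalis).
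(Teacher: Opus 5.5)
Your proposal is correct and follows essentially the same three-step route as the paper. The germ is intrinsic via the tangent-groupoid deformation; representatives are unitarily equivalent via an orthogonal change of linear coordinates on $T_\xi\Sigma$ together with a constant unitary on $\mathcal V_\xi$; and the Fourier transform turns the operator into a Hermitian matrix-valued multiplication operator, which gives symmetry and essential self-adjointness. You are simply more explicit than the paper, e.g.\ with the formula $\sum_j c_0(e^j(\xi))\,\partial_{v_j}$ and the observation that the factor $t$ kills the connection and zeroth-order terms.

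One small refinement: that expression is already independent of the choice of orthonormal basis. A change of frame is therefore intertwined by the orthogonal change of variables alone and needs no spin$^c$ lift to $\mathcal V_\xi$, which the spin$^c$ structure would in any case only supply for orientation-preserving changes.
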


\subsection{Complex analytic extension of the local germ}
The localization stage constructed above is intrinsically defined on the physical locus \(\Sig.\)
To pass from \(\Sig\) to the complex deformation space \(P_{\C}\), we impose a holomorphic extension hypothesis on the
intrinsic local spectral germ.

\begin{hypothesis}[Canonical holomorphic extension of the intrinsic local germ]\label{ax:canonical-analyticity}
There exists an open neighborhood
\[
\mathcal U\subset P_{\C}
\]
of \(\Sig\) such that the intrinsic local spectral germ extends to a
holomorphic family
\[
\zeta\longmapsto \mathfrak L_\zeta(\mathfrak D),
\qquad \zeta\in \mathcal U.
\]

Moreover, every point
\(\zeta_0\in \mathcal U\)
admits an open neighborhood
\(W\Subset \mathcal U\)
on which the germ can be represented by a holomorphic family of local
spectral operators
\[
D_W(\zeta).
\]
On
\(W\cap \Sig\),
these representatives belong to the canonical unitary-equivalence class
determined by Theorem~\ref{thm:local-germ-invariance}.

For each \(\zeta\in \mathcal U\) and each \(t>0\), the intrinsic local spectral germ admits a diagonal heat density
\[
k_t^{\mathrm{loc}}(\zeta,\bar\zeta),
\]
depending only on the germ itself and smooth in \((\zeta,\bar\zeta)\).
For every compact set \(K\Subset \mathcal U\), as \(t\downarrow0,\) the local heat density admits a short-time asymptotic expansion of the form
\[
k_t^{\mathrm{loc}}(\zeta,\bar\zeta)
\sim
\sum_{m\ge 0} a_m(\zeta,\bar\zeta)\, t^{(m-n)/2},
\qquad
n:=\dim \Sig,
\]
uniformly on \(K\).

After shrinking the parameter space if necessary, we continue to denote the resulting deformation space by \(P_{\C}.\)
\end{hypothesis}

\subsection{Local spectral density and mixed response form}

The local spectral response is extracted from the diagonal heat density associated with the intrinsic local spectral germ through a finite-part renormalization procedure.

\begin{definition}[Renormalized local spectral density and mixed response form]\label{def:heat-potential}
Fix normalization data consisting of a base point
\[
\xi_0\in \Sig,
\]
and a small-time cutoff
\[
\chi\in C_c^\infty([0,\infty))
\]
with \(\chi\equiv 1\) in a neighborhood of \(0\).

For \(\zeta\in P_{\C}\), let
\[
k_t^{\mathrm{loc}}(\zeta,\bar\zeta)
\]
be the diagonal heat density furnished by Hypothesis~\ref{ax:canonical-analyticity}. Since \(\chi\) has
compact support and the local heat expansion of Hypothesis~\ref{ax:canonical-analyticity} holds uniformly
on compact subsets, the truncated integral
\[
I_\varepsilon(\zeta,\bar\zeta)
:=
\int_\varepsilon^\infty
\chi(t)\,
\Bigl[
k_t^{\mathrm{loc}}(\zeta,\bar\zeta)
-
k_t^{\mathrm{loc}}(\xi_0,\bar\xi_0)
\Bigr]
\,\dd t
\]
admits an asymptotic expansion as \(\varepsilon\downarrow 0\). The \emph{renormalized local spectral
density} is the Hadamard finite part
\[
\rho_\chi^{\mathrm{ren}}(\zeta,\bar\zeta)
:=
\operatorname{FP}_{\varepsilon\downarrow 0} I_\varepsilon(\zeta,\bar\zeta).
\]
The associated \emph{mixed response form} is the smooth \((1,1)\)-form
\begin{equation}
\mathcal G_{\chi} := i\,\partial\bar\partial \rho_\chi^{\mathrm{ren}} .
\label{eq:mixed-response-form}
\end{equation}
Biologically, Eq.~\eqref{eq:mixed-response-form} describes how the renormalized spectral response changes across nearby deformation modes, thereby encoding local sensitivity of the protein response landscape.
In local holomorphic coordinates \((\zeta^a)\) on \(P_{\C}\), we write
\[
\mathcal G_{\chi}
=
i\,G^{(\chi)}_{a\bar b}(\zeta,\bar\zeta)\,
d\zeta^a\wedge d\bar\zeta^b.
\]
\end{definition}
The mixed response form depends only on the intrinsic local spectral germ and is independent of auxiliary local representative families.
\subsection{Local response potential}

The mixed response form is the primary geometric response object. A scalar potential is secondary and is
introduced only in open sets where the local \(\partial\bar\partial\)-equation can be solved.

\begin{definition}[Local response potential]\label{def:local-response-potential}
Let \(W\subset P_{\C}\) be an open set. A \emph{local response potential} for the canonical Dirac datum on
\(W\) is a real-valued smooth function
\[
\Phi_{\chi,W}\in C^{\infty}(W,\R)
\]
such that
\[
\mathcal G_{\chi}|_{W}=i\,\partial\bar\partial \Phi_{\chi,W}.
\]
If \(\xi_0\in W\), the potential may be normalized by the condition
\[
\Phi_{\chi,W}(\xi_0,\bar\xi_0)=0.
\]
Without a chosen base point in \(W\), it is understood only up to the addition of a pluriharmonic function.
\end{definition}

\begin{proposition}
Whenever a local response potential exists, any two such potentials differ by a pluriharmonic function.
\end{proposition}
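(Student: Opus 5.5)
Let \(\Phi_{\chi,W}\) and \(\Phi'_{\chi,W}\) be two local response potentials on the same open set \(W\subset P_{\C}\), in the sense of Definition~\ref{def:local-response-potential}. The plan is to look at their difference
\[
h:=\Phi_{\chi,W}-\Phi'_{\chi,W}\in C^\infty(W,\R)
\]
and show directly that it is pluriharmonic. The only input needed is that both functions solve the same local \(\partial\bar\partial\)-equation. No property of the spectral germ or of Hypothesis~\ref{ax:canonical-analyticity} enters beyond the fact that \(\mathcal G_\chi|_W\) is a fixed smooth \((1,1)\)-form.

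The first step is to subtract the two defining identities \(\mathcal G_\chi|_W=i\,\partial\bar\partial\Phi_{\chi,W}\) and \(\mathcal G_\chi|_W=i\,\partial\bar\partial\Phi'_{\chi,W}\). Since \(\partial\bar\partial\) is \(\C\)-linear on smooth functions, this gives \(i\,\partial\bar\partial h=0\) on \(W\), and hence \(\partial\bar\partial h=0\). Next I will write this in local holomorphic coordinates \((\zeta^a)\) on \(P_{\C}\), as in Definition~\ref{def:heat-potential}. The condition becomes
\[
\frac{\partial^2 h}{\partial\zeta^a\,\partial\bar\zeta^b}=0
\qquad\text{for all } a,b .
\]
For a real-valued \(C^2\) function this is exactly the definition of pluriharmonicity. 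Equivalently, the restriction of \(h\) to every complex line through \(W\) is harmonic, because the Laplacian of that restriction is a positive multiple of \(\sum_{a,b}v^a\bar v^b\,\partial_a\bar\partial_b h\), which vanishes.

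There is no substantial obstacle here. The one point to state carefully is which definition of pluriharmonic is being used. I will take the analytic one, namely a real \(C^2\) function with \(\partial\bar\partial h=0\), which needs no topological assumption on \(W\). If one prefers the characterization of \(h\) as locally the real part of a holomorphic function, I will add the standard remark that \(\partial h\) is then a \(d\)-closed holomorphic \((1,0)\)-form. It is therefore locally \(\partial\)-exact, \(\partial h=\partial F\), and this yields \(h=\operatorname{Re}(2F)+\text{const}\) on simply connected pieces. I will also note that the converse holds trivially: adding any pluriharmonic function to \(\Phi_{\chi,W}\) produces another potential. This is consistent with the normalization \(\Phi_{\chi,W}(\xi_0,\bar\xi_0)=0\) removing only additive constants and not the full pluriharmonic ambiguity.
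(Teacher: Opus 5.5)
Your argument is correct and is the intended one. The paper states this proposition without a written proof, treating it as an immediate consequence of the $\C$-linearity of $i\,\partial\bar\partial$ applied to the difference of two solutions of $\mathcal G_\chi|_W = i\,\partial\bar\partial\Phi_{\chi,W}$. Your side remark is also accurate: the base-point normalization $\Phi_{\chi,W}(\xi_0,\bar\xi_0)=0$ removes only additive constants, not the full pluriharmonic ambiguity, which sharpens the wording of Definition~\ref{def:local-response-potential}.
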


\section{Minimal helical realization of ordered transport and response}
\label{sec:helical-realization}

We record a minimal protein-geometric realization of the abstract construction on an idealized \(\alpha\)-helix. The aim is not to model atomistic dynamics, but to exhibit, in the simplest nontrivial backbone geometry, the coexistence of a constant quaternionic transport baseline, order-sensitive geometric transport, a spectral-response density, and the commutative collapse of the ordered sector. The \(\alpha\)-helix is a natural test object for this purpose: its backbone geometry is regular, chiral, and classically described by nearly constant pitch and turn parameters \citep{Pauling1951,Kabsch1983}.

This example is deliberately minimal. It is not intended to reproduce atomistic molecular dynamics or to validate a quantitative protein model. Its purpose is to show, in familiar protein geometry, how the abstract formalism distinguishes endpoint similarity from ordered transport history. The \(\alpha\)-helix is useful because its undeformed geometry has nearly constant pitch, radius, curvature, and torsion. This provides a simple baseline from which localized pitch and bending perturbations can be compared in two different orders.

Accordingly, the numerical values of the endpoint-derived ordered-transport discrepancy \(\Delta_{\mathrm{ord}}\) and the spectral-response quantities reported below are illustrative outputs of this schematic realization and are not calibrated to any specific experimental protein, NMR ensemble, or molecular-dynamics trajectory.

Precisely because of this regularity, the helical setting provides a transparent bridge between the formal construction and protein geometry. It shows that the transport formalism already detects a biologically meaningful distinction in a familiar backbone geometry: two localized deformation histories may remain close at the level of endpoint conformations while producing distinct endpoint-derived ordered quaternionic transports. Thus, even in this deliberately minimal setting, the ordered transport calculation is sensitive to order-dependent geometric differences that remain small under conventional endpoint descriptors.

\subsection{Ideal helix and constant quaternionic generator}

Consider the residue-indexed \(C_\alpha\) backbone
\[
x_n=
\bigl(
R\cos(n\theta),\,
R\sin(n\theta),\,
nh
\bigr),
\qquad n=0,\ldots,N-1,
\]
where \(n\) indexes the residues and \(N\) denotes the number of residues in the helical segment. The standard ideal-helical values are
\[
\theta\simeq 100^\circ,
\qquad
h\simeq 1.5\,\text{\AA},
\qquad
R\simeq 2.3\,\text{\AA}.
\]
Equivalently, in arclength form one may write
\[
x(\ell)=
\left(
R\cos\frac{\ell}{a},\,
R\sin\frac{\ell}{a},\,
\frac{b \ell}{a}
\right),
\qquad
a=(R^2+b^2)^{1/2},
\qquad
b=\frac{p}{2\pi},
\]
where \(p\) is the pitch.  The curvature and torsion of this smooth helix are constant:
\[
\kappa=\frac{R}{R^2+b^2},
\qquad
\tau=\frac{b}{R^2+b^2}.
\]
Let \(F(\ell)=(t(\ell),n(\ell),b_F(\ell))\in \mathrm{SO}(3)\) be the Frenet frame along the ideal helix, and let \(q(\ell)\in \mathrm{SU}(2)\) be a spin lift of this frame. We use throughout the quaternionic transport field \(\Omega\) introduced in Eq.~\eqref{eq:quaternionic-transport-field}.

\begin{proposition}[Ideal helical transport]\label{prop:ideal-helical-transport}
For the ideal helix above, the quaternionic transport field is constant.  In a Frenet-frame convention, it has the form
\[
\Omega_0=\tau\,\mathbf i+\kappa\,\mathbf k,
\]
up to the harmless relabelling of the imaginary quaternionic basis induced by the initial frame choice. Consequently
\[
q(\ell)=q(0)\exp\left(\frac{\ell}{2}\Omega_0\right),
\qquad
U_0(L)=\exp\left(\frac{L}{2}\Omega_0\right),
\]
where \(U_0(L)\) denotes the ordered transport over a segment of arclength \(L\).
\end{proposition}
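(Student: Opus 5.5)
The plan is to reduce the statement to the Frenet--Serret equations and the standard correspondence between \(\mathfrak{so}(3)\) and \(\mathfrak{su}(2)\). Differentiating \(x(\ell)\) twice and three times in arclength gives the constant curvature \(\kappa\) and torsion \(\tau\) quoted above. In matrix form the Frenet--Serret system reads \(\partial_\ell F = F\,A\) with the constant skew matrix
\[
A=\begin{pmatrix}0&-\kappa&0\\ \kappa&0&-\tau\\ 0&\tau&0\end{pmatrix}.
\]
Here the columns of \(F\) are \(t,n,b_F\), and the body-frame angular velocity is the Darboux vector \(\omega=(\tau,0,\kappa)\) in the basis \((t,n,b_F)\). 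Since \(A\) is \(\ell\)-independent, the frame is \(F(\ell)=F(0)\exp(\ell A)\).

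Next I would lift this to \(\mathrm{SU}(2)\). Under the double cover \(q\mapsto R_q\), \(R_q(v)=q v \bar q\) on imaginary quaternions, the derivative at the identity sends a pure quaternion \(u\) to the rotation generator \(v\mapsto uv-vu=2\,u\times v\). If \(q(\ell)\) is a continuous spin lift of \(F(\ell)\), then \(q^{-1}\partial_\ell q\) is pure imaginary, and its image under the differential of the covering map is \(F^{-1}\partial_\ell F=A\).

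Matching factors, \(A\) corresponds to rotation with angular velocity \(\omega\), so \(q^{-1}\partial_\ell q=\tfrac12(\omega_1\mathbf i+\omega_2\mathbf j+\omega_3\mathbf k)\). This gives \(\Omega=2q^{-1}\partial_\ell q=\tau\mathbf i+\kappa\mathbf k\), which is constant, with \(\mathbf i,\mathbf j,\mathbf k\) identified with the body axes \(t,n,b_F\) through the initial frame. A different choice of \(q(0)\), or a different identification of the quaternion units with the axes, conjugates \(\Omega_0\) by a fixed unit quaternion or permutes and flips signs of the basis. This is exactly the ``harmless relabelling'' clause, so I would state that clause explicitly rather than fix a convention.

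Finally, \(\partial_\ell q=\tfrac12 q\,\Omega_0\) with \(\Omega_0\) constant is a linear ODE whose unique solution is \(q(\ell)=q(0)\exp(\tfrac{\ell}{2}\Omega_0)\). The ordered transport \(U_0(L)\) over a segment \([\ell_0,\ell_0+L]\) is defined as \(q(\ell_0)^{-1}q(\ell_0+L)\), the path-ordered exponential of \(\tfrac12\Omega\). Because the generator is constant, the path ordering is trivial and \(U_0(L)=\exp(\tfrac L2\Omega_0)\), independent of \(\ell_0\). For the discrete \(C_\alpha\) trace, the same conclusion follows from the screw symmetry \(x_{n+1}=R_\theta x_n+(0,0,h)\), which makes consecutive discrete frames differ by a fixed rotation.

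The main obstacle is bookkeeping rather than analysis. The factor \(2\) in \(\Omega=2q^{-1}\partial_\ell q\) has to cancel the factor \(2\) in \(\mathrm{ad}_u=2\,u\times\cdot\), and the sign and orientation conventions must be tracked so that \(\Omega_0\) comes out as \(\tau\mathbf i+\kappa\mathbf k\) rather than, say, \(-\tau\mathbf i+\kappa\mathbf k\). I would first verify this on the elementary case \(\tau=0\) (a planar circle, pure rotation about \(b_F\)). Any residual sign mismatch would then be absorbed into the relabelling freedom the statement permits.
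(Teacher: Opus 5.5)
Your proposal is correct and follows the same route as the paper's proof: the Frenet--Serret equations with constant \(\kappa,\tau\) give a constant body angular velocity \((\tau,0,\kappa)\), which lifts through \(\mathrm{SU}(2)\to\mathrm{SO}(3)\) to the constant generator \(\Omega_0=\tau\mathbf i+\kappa\mathbf k\), and the exponential formulas follow by solving \(\partial_\ell q=\tfrac12 q\,\Omega_0\). Your check that the \(2\) in \(\Omega=2q^{-1}\partial_\ell q\) cancels the \(2\) in \(uv-vu=2\,u\times v\) (so no residual sign needs absorbing) supplies bookkeeping the paper leaves implicit; the one small imprecision is attributing the conjugation freedom to the choice of \(q(0)\), because neither source you might mean changes \(\Omega\) --- another lift of the same \(F(0)\) is just \(-q\), and a rigid rotation of the helix is a constant left factor --- whereas only relabelling the body axes, a constant right factor \(q\mapsto qp\), conjugates \(\Omega_0\).
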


\begin{proof}
For a helix of constant curvature and torsion, the Frenet equations are
\[
t'=\kappa n,
\qquad
n'=-\kappa t+\tau b_F,
\qquad
b_F'=-\tau n.
\]
Thus, the angular velocity of the moving frame has constant components in the frame basis. Under the double cover \(\mathrm{SU}(2)\to \mathrm{SO}(3)\), this constant angular velocity lifts to a constant element of \(\mathfrak{su}(2)\). Using the convention of Eq.~\eqref{eq:quaternionic-transport-field}, the corresponding generator is \(\Omega_0=\tau\mathbf i+\kappa\mathbf k\), up to the base convention fixed at \(\ell=0\).  The exponential formula for \(q(\ell)\) follows by solving the constant-coefficient equation \(\partial_\ell q=q\Omega_0/2\).
\end{proof}

In the discrete residue-indexed realization, one chooses a smooth or discrete adapted frame \(F_n\) along the points \(x_n\), lifts it to \(q_n\in \mathrm{SU}(2)\), and defines the elementary transport increments
\[
g_n=q_n^{-1}q_{n+1}.
\]
For the ideal helix, these increments are approximately constant:
\[
g_n\simeq \exp\left(\frac{\Delta \ell}{2}\Omega_0\right),
\]
where \(\Delta \ell\) is the residue-to-residue arclength.  This gives a concrete biological sector of the ordered transport algebra \(\Atr(\Sigma)\).  In the undeformed baseline, the constant-generator structure makes the transport regular and nearly translation-invariant along the helical axis.

\subsection{Two noncommuting helical deformation histories}

The simplest way to reveal order sensitivity is to compare two localized deformation histories.  Let \(A_\alpha\) denote a localized pitch perturbation, for example,
\[
(r,\theta,z)\longmapsto (r,\theta,z+\alpha f(z)),
\]
where \(f\) is a smooth bump along the helix.  Let \(B_\beta\) denote a localized bending perturbation, for example
\[
x\longmapsto R_y(\beta g(z))x,
\]
where \(R_y\) is a rotation about a transverse axis and \(g\) is another smooth bump.  The two histories
\[
\gamma_{AB}=B_\beta\circ A_\alpha,
\qquad
\gamma_{BA}=A_\alpha\circ B_\beta
\]
need not define the same ordered transport, even when their visible endpoint descriptors are close.

For each history \(\gamma\in\{\gamma_{AB},\gamma_{BA}\}\), construct frames \(F_n^\gamma\), lifts \(q_n^\gamma\), increments
\[
g_n^\gamma=(q_n^\gamma)^{-1}q_{n+1}^\gamma,
\]
and partial ordered products
\[
U_\gamma(m)
=
g_0^\gamma g_1^\gamma\cdots g_{m-1}^\gamma.
\]

This product order follows the right-action convention
$q_{n+1}^\gamma=q_n^\gamma g_n^\gamma$ used in the numerical implementation.

The order-sensitive comparison is then
\begin{equation}
\delta_{\mathrm{ord}}(m)
=
\left\|
\log\left(U_{AB}(m)^{-1}U_{BA}(m)\right)
\right\|,
\qquad
\Delta_{\mathrm{ord}}=\delta_{\mathrm{ord}}(N-1).
\label{eq:ordered-memory-diagnostic}
\end{equation}

In the present helical realization, $\Delta_{\rm ord}$ measures
the discrepancy between the ordered spatial transports reconstructed from the
two resulting final backbones. Biologically, Eq.~\eqref{eq:ordered-memory-diagnostic} quantifies how the reconstructed backbone transport differs when the same localized perturbations are applied in different orders. In the current implementation it should therefore be interpreted as endpoint-derived order sensitivity rather than as an independently accumulated trajectory-history variable.

At the abstract transport level, one may use the group commutator
\begin{equation}
C_{AB}
=
U_AU_BU_A^{-1}U_B^{-1},
\label{eq:transport-commutator}
\end{equation}
which is trivial only when the two elementary transports commute in the realized sector.
Biologically, Eq.~\eqref{eq:transport-commutator} tests whether two localized deformation events are order-independent; a nontrivial commutator is precisely
order dependence of the elementary transports at the abstract transport level.

The numerical controls are consistent with this interpretation: when the two deformation histories are constrained to produce the same final backbone, the reconstructed transport discrepancy vanishes to numerical precision. This supports interpreting \(\Delta_{\mathrm{ord}}\) as an illustrative measure of endpoint-derived order sensitivity in the present realization.

\subsection{Spectral response and commutative shadow}

\begin{figure}[t]
\centering
\includegraphics[width=0.98\textwidth]{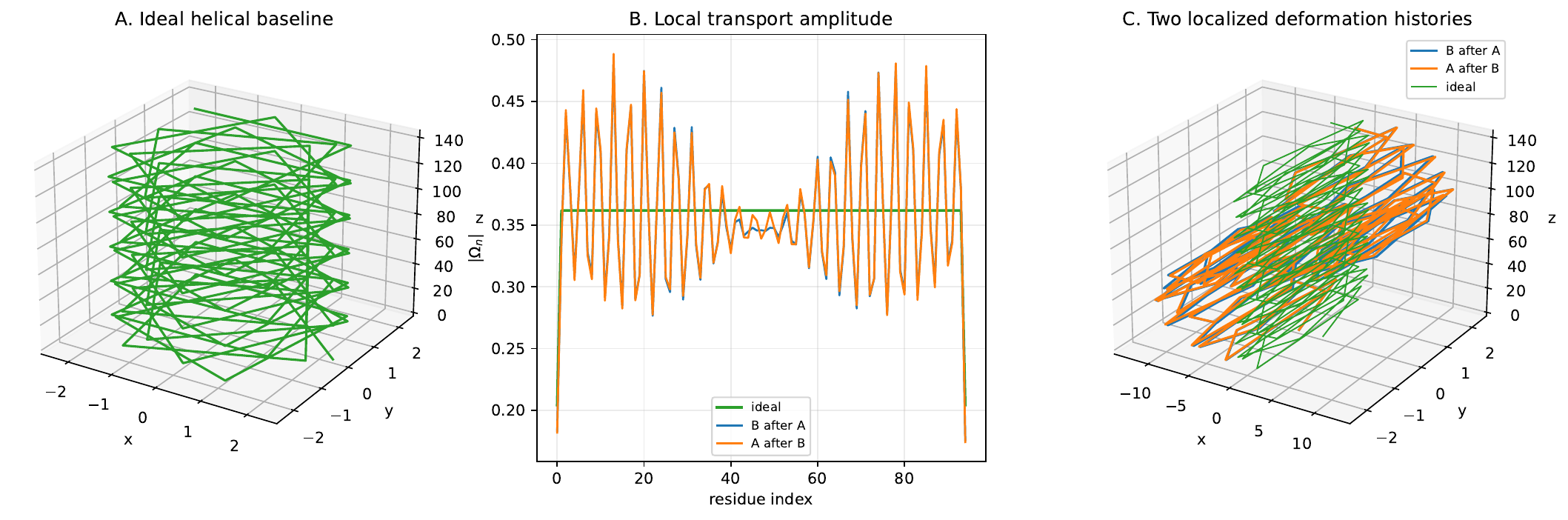}
\caption{Minimal helical realization of ordered quaternionic transport.
Panel A shows the ideal \(\alpha\)-helix baseline. 
Panel B shows the local quaternionic transport amplitude along the residue index. 
Panel C compares the final helical backbones obtained from the two
localized deformation histories.
The localized pitch deformation is denoted by \(A_{\alpha}\), and the localized bending deformation is denoted by \(B_{\beta}\). The two ordered histories are \(\gamma_{AB}=B_{\beta}\circ A_{\alpha}\) and \(\gamma_{BA}=A_{\alpha}\circ B_{\beta}\). With the standard convention for function composition, \(B_{\beta}\circ A_{\alpha}\) means that the pitch deformation is applied first and the bending deformation second, whereas \(A_{\alpha}\circ B_{\beta}\) means that the bending deformation is applied first and the pitch deformation second.}
\label{fig:helical-memory-geometry}
\end{figure}

The helical deformation family also carries the spectral-response layer.  Let
\(\mathcal P\subset\mathbb R^2\) be the pitch-bend parameter domain, and let
\[
\zeta_\gamma(\alpha,\beta)\in P_{\mathbb C},
\qquad \gamma\in\{\gamma_{AB},\gamma_{BA}\},
\]
denote the deformation mode obtained after frame extraction and
complexification.  The renormalized spectral density restricts to this sector
by
\[
\widehat{\rho}^{\,\mathrm{ren}}_{\chi,\gamma}(\alpha,\beta)
=
\rho_{\chi}^{\mathrm{ren}}
\bigl(
\zeta_\gamma(\alpha,\beta),
\overline{\zeta_\gamma(\alpha,\beta)}
\bigr),
\qquad
\widehat G_{\chi,\gamma}
=
i\,\partial\bar\partial
\widehat{\rho}^{\,\mathrm{ren}}_{\chi,\gamma}.
\]

\begin{figure}[t]
\centering
\includegraphics[width=0.98\textwidth]{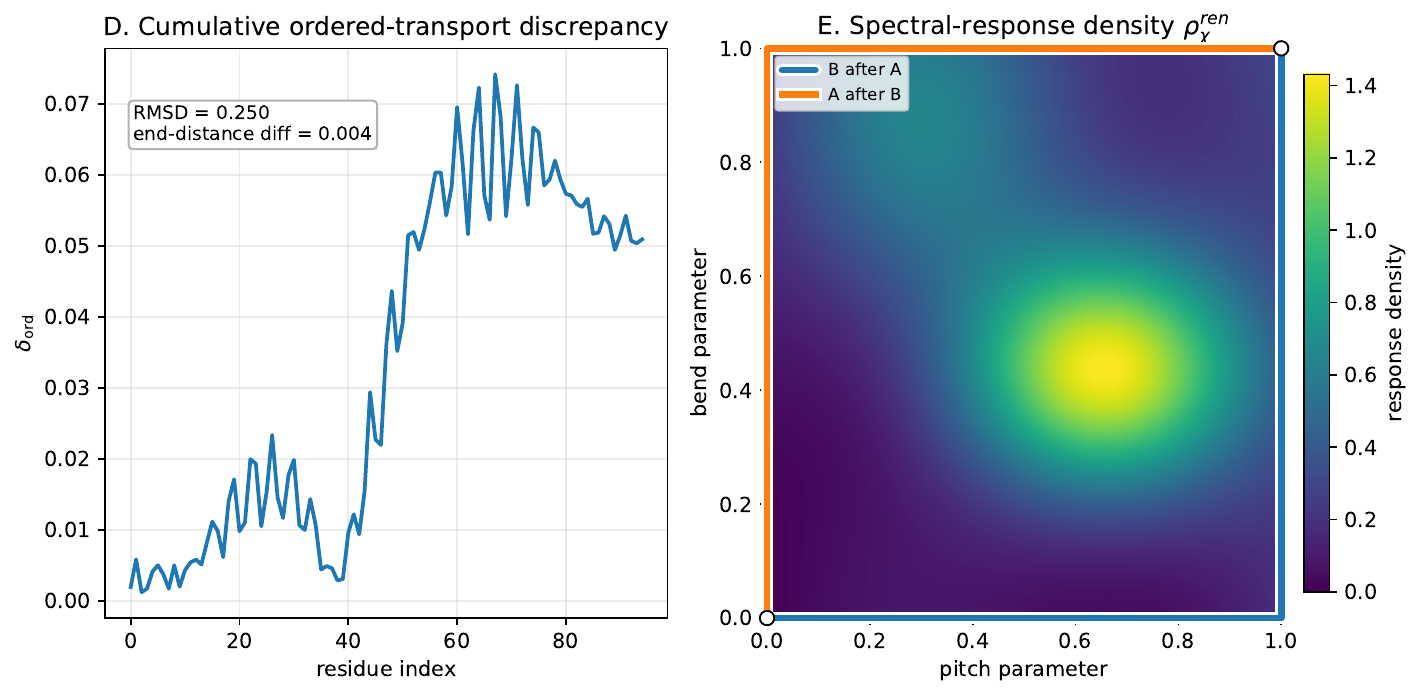}
\caption{Ordered-transport and spectral-response diagnostics.
Panel D shows the cumulative endpoint-derived ordered-transport
discrepancy. For the displayed parameters, the two final histories have a direct coordinate RMSD of 0.250 and an absolute difference in end-to-end distance of 0.004. 
Panel E shows the spectral-response density over the pitch-bend parameter domain. 
The key message is that the two
deformation orders may produce nearby final backbones while retaining a measurable difference in the spatial transports reconstructed from those backbones.}
\label{fig:helical-memory-response}
\vspace{-1.0em}
\end{figure}

Thus the same pitch-bend sector that detects transport order also carries a
local response landscape.

Finally, the commutative shadow
\[
E:\mathcal A_{\mathrm{tr}}(\Sigma)\to C_c^\infty(\Sigma)
\]
suppresses the formal order-memory sector $\mathcal M_{\mathrm{ord}}=\ker E$, while leaving the
response-memory data \(\widehat{\rho}^{\,\mathrm{ren}}_{\chi,\gamma}\) and \(\widehat G_{\chi,\gamma}\) available at the commutative level.  
This gives the helical realization of the memory separation stated in Section~\ref{sec:memory-collapse}.

This statement concerns the abstract commutative-shadow
construction. The numerical quantity $\Delta_{\rm ord}$ used above is
endpoint-derived and should not be identified directly with an independently
accumulated element of $\mathcal M_{\mathrm{ord}}$.

\begin{wrapfigure}{r}{0.5\textwidth}
\vspace{-1.0em}
\centering
\includegraphics[width=0.50\textwidth]{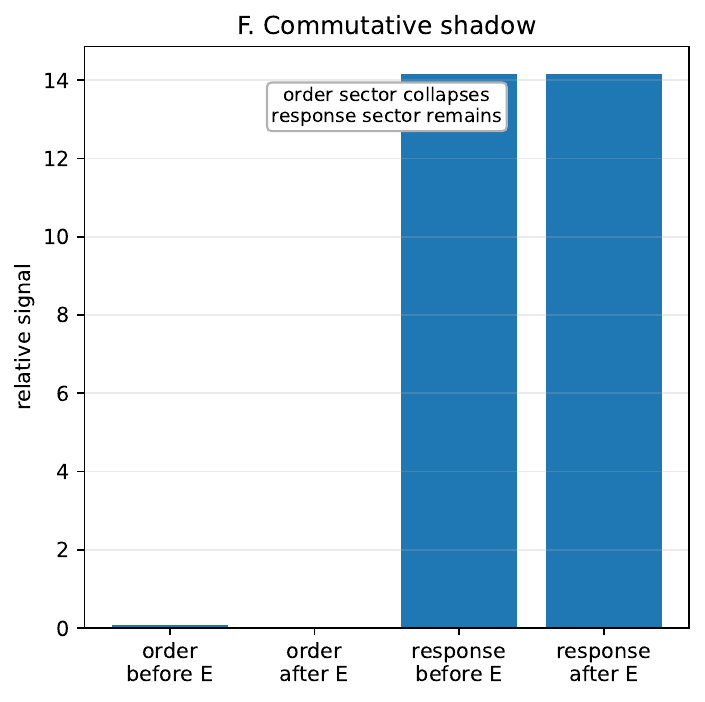}
\caption{Commutative shadow and memory-sector interpretation.
Panel F illustrates the commutative shadow: the order-sensitive sector collapses, whereas the response sector remains visible.}
\label{fig:helical-memory-shadow}
\vspace{-1.0em}
\end{wrapfigure}

Figures~\ref{fig:helical-memory-geometry}--\ref{fig:helical-memory-shadow} should therefore be read as a conceptual diagnostic. Figure~\ref{fig:helical-memory-geometry} shows the ideal helical baseline and two localized deformation histories. Figure~\ref{fig:helical-memory-response} shows that the cumulative endpoint-derived ordered-transport discrepancy can remain nonzero when endpoint-scale descriptors are close, and that the same pitch-bend sector supports a spectral-response landscape. Figure~\ref{fig:helical-memory-shadow} summarizes the separation between the order sector and the response sector: the order-sensitive contribution collapses under the commutative shadow, whereas the response sector remains visible. 
This diagnostic is intended to illustrate the formal definitions rather than to provide quantitative validation on a specific molecular system.
In particular, the numerical curves should be interpreted as a schematic realization of order-sensitive transport and response, not as a molecular-dynamics trajectory.

\subsection{Contribution to the response theory}
This helical calculation provides a minimal biological realization of the
deformation-first chain in Eq.~\eqref{eq:deformation-first-chain}.
The ideal helix supplies a constant-generator sector of \(\Atr(\Sig)\),
while localized pitch and bending perturbations produce two nearby but
noncommuting deformation histories. Their comparison yields a nonzero
endpoint-derived ordered-transport discrepancy
\(\Delta_{\mathrm{ord}}\), and the same two-parameter family carries a
pulled-back spectral-response density. Thus the example realizes, in one protein-like geometry, the separation
between order memory and response memory: the commutative shadow suppresses
the formal order-memory sector, whereas the response sector
determined by the Dirac realization remains.

\section{Two memory sectors and commutative collapse}\label{sec:memory-collapse}

The canonical realization separates two memory mechanisms, aligning with the view that protein function depends not only on static endpoints but also on ensembles, allostery, and dynamical response pathways \citep{HenzlerWildman2007}. 
Recent generative AI frameworks capture these non-equilibrium features through protein conformation generation models \citep{Lu2025} and force-guided \(\mathrm{SE}(3)\) diffusion models \citep{Wang2024}.

\begin{definition}[Memory sectors]\label{def:memory-sectors}

Fix normalization data
\[
(\chi,\xi_0).
\]
For the canonical realization, the \emph{response-memory sector} is the family of response data consisting of the intrinsic local spectral germs \(\mathfrak L_\xi(\mathfrak D)\) for \(\xi\in\Sigma\), the renormalized local spectral density \(\rho_\chi^{\mathrm{ren}}\), the mixed response form \(\mathcal G_\chi\), and, whenever \(\mathcal G_\chi\) is locally \(\partial\bar\partial\)-exact, the corresponding local response potentials \(\Phi_{\chi,W}\) on open sets \(W\subset P_{\C}\).

The \emph{order-memory sector} is the cocycle-twisted ordered transport sector
encoded by \(\sigma,\) and the \(C_c^\infty(\Sig)\)-subbimodule
\[
\mathcal M_{\mathrm{ord}}=\ker E.
\]
\end{definition}

\begin{proposition}[Collapse of the order-memory sector under the commutative shadow]\label{thm:commutative-collapse}
Let
\[
E:\Atr(\Sig)\to C_c^\infty(\Sig)
\]
be the commutative shadow map. Then
\[
E|_{\mathcal M_{\mathrm{ord}}}=0.
\]

Its image is the commutative coefficient algebra \(C_c^\infty(\Sig),\)
so the cocycle-twisted ordered transport structure has no nontrivial
commutative shadow.

By contrast, the response-memory sector determined by the intrinsic
local spectral germ,
the renormalized local spectral density,
and the mixed response form remains canonically determined by the Dirac realization. Whenever
\(\mathcal G_\chi\) is locally \(\partial\bar\partial\)-exact, the corresponding local response potentials \(\Phi_{\chi,W}\)
belong to the same surviving response sector.

Consequently, the commutative shadow suppresses the order-memory
sector while it does not in general suppress the spectral-response
sector.
\end{proposition}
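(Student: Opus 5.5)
The proof reduces almost entirely to unwinding Definition~\ref{def:commutative-shadow} and Definition~\ref{def:memory-sectors}. I would treat its three assertions in turn.

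\textbf{The order-memory sector.} First I would record that $E|_{\mathcal M_{\mathrm{ord}}}=0$ is tautological, since $\mathcal M_{\mathrm{ord}}:=\ker E$ by definition.

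\textbf{The image of $E$.} Next I would show that the image of $E$ is exactly $C_c^\infty(\Sig)$. Because $E$ is a $C_c^\infty(\Sig)$-bimodule projection onto the coefficient algebra, we have $E(f)=f$ for every $f\in C_c^\infty(\Sig)$, viewed through the distinguished nondegenerate commutative $*$-subalgebra of Definition~\ref{def:transport-algebra}. Hence $E$ is surjective onto $C_c^\infty(\Sig)$, and $E\circ E=E$. The bimodule decomposition
\[
\Atr(\Sig)=C_c^\infty(\Sig)\oplus\mathcal M_{\mathrm{ord}}
\]
then follows from $a=E(a)+(a-E(a))$ together with $E(a-E(a))=0$. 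Left and right $C_c^\infty(\Sig)$-linearity of $E$ shows that $\mathcal M_{\mathrm{ord}}$ is a subbimodule.

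\textbf{No nontrivial commutative shadow of the twisted structure.} To justify the claim that the cocycle-twisted ordered structure has no nontrivial commutative shadow, I would argue at the level of generators. Every commutative image already lies in $C_c^\infty(\Sig)$, so after subtracting $E(a)$ the part of any element carrying the ordering and $\sigma$-twist lies in $\ker E$. In the canonical smooth core $\Atr^\infty(\Sig)$ I would make this concrete as follows. Take $E$ to be restriction to the identity bisection. This sends $U_bf$ to $f|_{\{\beta=\mathrm{id}\}}$, interpreted via the identity relations of Definition~\ref{def:canonical-smooth-core}, and otherwise to $0$. The cocycle phases $\sigma(\gamma_2,\gamma_1)$ enter only through products of non-unit generators. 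Their commutative part is then computed on $\Gad(\Sig)^{(2)}_{\mathrm{ret}}$, where $\sigma$ is normalized. The point to check is that no phase survives in the image, so that $E$ forgets the order and twist data.

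\textbf{The response-memory sector.} For this part I would invoke Theorem~\ref{thm:local-germ-invariance}, Hypothesis~\ref{ax:canonical-analyticity}, and Definition~\ref{def:heat-potential}. These show that $\mathfrak L_\xi(\mathfrak D)$, $k_t^{\mathrm{loc}}$, $\rho^{\mathrm{ren}}_\chi$ and $\mathcal G_\chi$ are built from $\mathfrak D_0=c_0\circ\nabla^{\mathcal V}$ through its normal operator and heat data alone, so their construction never passes through $E$. It remains to see that the spectral data are in fact visible at the commutative level. By Theorem~\ref{thm:selfadjoint-dirac}, $[\mathfrak D,\pi_{\mathrm{can}}(f)]=c_{\Hh}(d_{\EP}f)$, so the triple $(C_c^\infty(\Sig),\Hh,\mathfrak D)$, which is the restriction to the image of $E$, still carries the full Dirac operator. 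Every quantity in the response sector is therefore a function of this restricted datum. The local potentials $\Phi_{\chi,W}$ are determined by $\mathcal G_\chi$ up to pluriharmonic functions, by the preceding proposition, and so belong to the same sector.

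\textbf{Main obstacle.} The hard step is making ``does not in general suppress'' precise. One must show that $\mathcal G_\chi$ need not vanish, that is, that $\rho^{\mathrm{ren}}_\chi$ is not pluriharmonic. I would first try to obtain this from the leading heat coefficients. The $t^{(2-n)/2}$ coefficient $a_2$ involves scalar curvature and the twisting curvature $\Theta$. Its finite-part contribution to $\rho^{\mathrm{ren}}_\chi-\rho^{\mathrm{ren}}_\chi(\xi_0)$ is nonconstant whenever the curvature varies on $\Sig$, and one would argue that its holomorphic extension is then generically not pluriharmonic. Failing that, I would fall back on the helical sector of Section~\ref{sec:helical-realization} as an explicit example with $\widehat G_{\chi,\gamma}\neq0$.
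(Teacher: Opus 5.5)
Your core argument is the paper's proof. $E|_{\mathcal M_{\mathrm{ord}}}=0$ because $\mathcal M_{\mathrm{ord}}:=\ker E$. The image is $C_c^\infty(\Sig)$ because $E(f)=f$. The response data $\mathfrak L_\xi(\mathfrak D)$, $\rho^{\mathrm{ren}}_\chi$, $\mathcal G_\chi$, $\Phi_{\chi,W}$ are built from $\mathfrak D$ without passing through $E$, and that is all the paper means by ``survives''. Your remark that the commutative triple $(C_c^\infty(\Sig),\Hh,\mathfrak D)$ still carries $\mathfrak D$ is a fair sharpening of this. The paper reads ``the cocycle-twisted ordered transport structure has no nontrivial commutative shadow'' through Definition~\ref{def:memory-sectors}, i.e.\ simply as $E(\mathcal M_{\mathrm{ord}})=0$, and proves nothing beyond that.

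Two of your additions go further and would fail, so drop them.

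The first is the generator-level check that ``no phase survives in the image'', which is false. Normalization only gives $\sigma(\id_{t(\gamma)},\gamma)=\sigma(\gamma,\id_{s(\gamma)})=1$ and says nothing about return pairs. Now take three admissible bisections in one convex chart with $\beta_3\circ\beta_2\circ\beta_1=\mathrm{id}$. The product rule of the transport core gives
\[
(U_{b_3}f_3)(U_{b_2}f_2)(U_{b_1}f_1)
=
\Bigl[x\mapsto e^{\,i\int_{\Delta(x,\beta_1x,\beta_2\beta_1x)}\Theta}\,f_3(\beta_2\beta_1x)\,f_2(\beta_1x)\,f_1(x)\Bigr]\in C_c^\infty(\Sig).
\]
Here the last return pair contributes only a degenerate simplex. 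This coefficient function is fixed by every commutative shadow, so the flux phase is visible in the image of $E$. So is the orientation of the loop: the reverse-order product of the inverse bisections traverses the triangle the other way and gives the conjugate phase. Only the $\ker E$ component is killed.

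Separately, $U_bf\mapsto f|_{\{\beta=\mathrm{id}\}}$ does not in general land in $C_c^\infty(\Sig)$, since the fixed-point set of a smooth $\beta$ need not be open. The statement takes $E$ as given, so no construction is needed.

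The second is your ``main obstacle''. Nonvanishing of $\mathcal G_\chi$ is not part of what is proved; ``does not in general suppress'' only means the response sector is defined independently of $E$. Your heat-coefficient route also could not be completed from the paper's data:
\begin{itemize}
\item Hypothesis~\ref{ax:canonical-analyticity} only postulates an expansion with unspecified $a_m$, and nothing ties these to the heat invariants of $\mathfrak D^2$.
\item Even if it did, $\Theta$ drops out of the traced $a_2$, because Clifford multiplication by a $2$-form is traceless.
\item The only concrete local operators the paper supplies, the normal operators $D(\xi)$, are mutually unitarily equivalent flat Dirac operators, so their heat density does not depend on $\xi$ at all.
\end{itemize}
The helical section only displays a schematic density and never computes $\widehat G_{\chi,\gamma}$, so it cannot serve as a fallback example either.
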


\begin{proof}
From the decomposition
\[
\Atr(\Sig)
=
C_c^\infty(\Sig)
\oplus
\mathcal M_{\mathrm{ord}},
\]
one has
\[
E|_{\mathcal M_{\mathrm{ord}}}=0.
\]
Hence the cocycle-twisted ordered transport structure disappears under the commutative shadow.

The response-memory sector is determined by the intrinsic spectral
geometry associated with the canonical Dirac realization, namely by the local spectral germ and its induced response data from
Section~\ref{sec:localization-response}. It therefore survives the
commutative collapse of the order-sensitive sector
\[
\mathcal M_{\mathrm{ord}}.
\]
\end{proof}
This separation suggests that \(\mathrm{SE}(3)\)-equivariant representations may retain aspects of global geometric information even when local spatial descriptors collapse under commutative projections \citep{Fuchs2020,GarciaSatorras2021}.

\section{Limitations and outlook}\label{sec: limit}

The present work should be understood as a theoretical foundation. It
does not yet provide an atomistic molecular-dynamics simulation, an
experimentally validated predictor of protein response, or a complete
computational pipeline for empirical macromolecular systems. The
helical realization is intentionally minimal: it illustrates
endpoint-derived order sensitivity in nearby final backbones, rather than
providing quantitative validation of an independent trajectory-memory observable
on a specific protein system. The purpose of the present study is therefore to define the
mathematical objects carrying deformation order and response memory
before empirical estimators and validation protocols are developed.

A practical realization of the framework will require three main
steps. First, protein structures, NMR ensembles, or molecular-dynamics
trajectories must be converted into discrete local frames and
quaternionic transport increments along the backbone. Second,
order-sensitive quantities and numerical approximations of the
spectral objects
\[
D_{\mathrm{num}},\qquad
\mathfrak L^{\mathrm{num}}_\xi,\qquad
\rho^{\mathrm{ren}}_{\mathrm{num}},\qquad
\mathcal G_{\mathrm{num}}
\]
must be estimated and compared with standard endpoint descriptors.
Third, these signatures must be tested on systems where
path-dependence is expected, including allosteric transitions,
mutation-order effects, conformational switching, and epistatic
rearrangements.

At a practical level, molecular-dynamics trajectories could be converted frame by frame into residue-indexed adapted backbone frames and lifted to unit quaternions with a consistent choice of sign across neighboring residues and successive frames. The resulting discrete quaternionic increments could then be used to construct ordered transport descriptors along the backbone and, when temporal information is available, across successive trajectory frames. This trajectory-level construction is distinct from the present helical quantity \(\Delta_{\mathrm{ord}}\), which is reconstructed from final backbones and therefore measures endpoint-derived order sensitivity. For NMR ensembles, which generally do not provide an observed temporal ordering, a history-level analysis would additionally require experimentally supported or explicitly inferred transition information.

\section{Conclusion}\label{sec13}
In conclusion, this paper proposes a deformation-first framework for representing path-dependent response in protein structures. The central idea is that endpoint conformations and ordered deformation histories should be treated as distinct but coupled objects. By using quaternionic frame transport along the protein backbone, the framework provides a natural representation of local rotational deformation. By organizing admissible deformation paths into a noncommutative transport algebra, it records the fact that the order of local perturbations may matter.

The minimal helical realization illustrates order sensitivity in a familiar protein geometry. An idealized \(\alpha\)-helix provides a constant quaternionic transport baseline, while localized pitch and bending perturbations generate nearby final backbones that differ in the ordered quaternionic transport reconstructed from them. This yields a nonzero endpoint-derived ordered-transport discrepancy. At the same time, the spectral-response layer provides response objects that can remain visible after the order-sensitive sector collapses under the commutative shadow.

The resulting framework should therefore be viewed as a mathematical
foundation for deformation memory in protein dynamics. It does not
replace atomistic simulations or generative models, but provides a
formal layer for representing ordered histories that such models may
need to preserve. For protein researchers, the practical message is
that endpoint similarity and deformation-history equivalence are
distinct notions: two conformations may appear close under conventional
structural descriptors while retaining different ordered transport
histories. The framework provides a language for representing this
distinction and for designing future descriptors of path-dependent
protein response.

Future work will focus on empirical realization from molecular
structures and trajectories, numerical approximation of the
spectral-response objects, and integration with geometric deep learning
approaches for protein dynamics. More broadly, this deformation-memory layer could provide one component of future protein world models, understood as models that represent not only protein states, but also perturbations, ordered transitions, and path-dependent responses. In this sense, the proposed framework offers a theoretical bridge between protein geometry, deformation history, and future computational models of protein response.

\bmhead{Acknowledgements}
X.C is supported by a PhD grant from Région Réunion and the European Union, co-financed by the FEDER INTERREG 2021-2027 programme, No. DESVE/20251283, Ref. D2026/3446.
DM-O acknowledges funding from FONDECYT Iniciación 11250295. DM-O gratefully acknowledges support from the Centre for Biotechnology and Bioengineering (CeBiB; PIA project FB0001 and AFB240001, ANID, Chile). PEACCEL was supported through a research program partially co-funded by the European Union (EU) and Région Réunion (FEDER). C-F.K is supported through a research program funded by the European Union (EU) and Région Réunion (FEDER-FSE 2021/2027, n° 2025-0954-007180).

\section*{Declarations}
\textbf{Conflict of interest} The authors declare no conflict of interest.

\section*{Author contributions statement}

FC, MB, C-FK and DM-O: conceptualization. X.C: software implementation. FC, MB, C-FK, AM, JPC and DM-O: methodology. X.C, FC, MB, C-FK, AM, CD, JPC and DM-O: validation. X.C, FC, MB and C-FK: investigation. X.C, FC, MB, C-FK, AM, JPC and DM-O: writing and editing. FC, CD and MB: supervision and funding resources. FC and CD: project administration. All authors reviewed and approved the final version of the manuscript.

\section*{Artificial intelligence usage statement}

Artificial intelligence-based tools were used exclusively to assist with language refinement, grammatical correction, and text editing during manuscript preparation. All scientific content, methodological decisions, software development, analyses, and conclusions were conceived, implemented, verified, and validated by the authors.

\section*{Code and data availability statement}

No empirical dataset is required for the theoretical results presented in this manuscript. The illustrative scripts used to generate the schematic helical figures will be made available in a public repository upon acceptance or upon arXiv release.

\begin{appendices}

\section*{Appendix outline}

\begin{enumerate}
\item Appendix A. Biological and geometric motivation
\begin{enumerate}
    \item Empirical realization of the deformation sector
    \item Protein observables generating the connection field
\end{enumerate}

\item Appendix B. Technical Foundations of Admissible Transport
\begin{enumerate}
    \item Representative formulas for path operations
    \item Additional properties of normalized unitary cocycle
    \item Additional properties of ordered transport algebra
    \item Additional properties of commutative shadow
    \item Additional properties of transport core
\end{enumerate}

\item Appendix C. Proof of propositions and theorems
\begin{enumerate}
    \item Proof of Proposition~\ref{prop:global-geometric-cocycle}
    \item Proof of Proposition~\ref{prop:canonical-smooth-representation}
    \item Proof of Theorem~\ref{thm:selfadjoint-dirac}
    \item Proof of Theorem~\ref{thm:local-germ-invariance}
\end{enumerate}
\end{enumerate}

\section{Biological and geometric motivation}
This appendix is purely motivational and is not used in any theorem or construction of the main text.
\subsection{Empirical realization of the deformation sector}
\label{app:empirical-realization}
Empirical protein data enter the theory only at the level of geometric realization, not at the level of primitive definition.
Structural databases, NMR ensembles, and molecular-dynamics trajectories determine physically realized sectors of the abstract connection-grounded deformation framework.

Given an empirical input family
\[
X_{\mathrm{data}},
\]
the frame-to-connection extraction procedure induces a realization map
\[
\mathcal E_{\mathrm{data}}:X_{\mathrm{data}}\to \Sig.
\]
Its image defines the realized deformation sector
\[
\Sig_{\mathrm{data}}
:=
\mathcal E_{\mathrm{data}}(X_{\mathrm{data}})
\subset \Sig,
\]
together with the induced admissible path sector and the corresponding reduced ordered transport algebra. Numerical objects such as
\[
D_{\mathrm{num}},\qquad \mathfrak L^{\mathrm{num}}_\xi,\qquad \rho^{\mathrm{ren}}_{\mathrm{num}},\qquad \mathcal G_{\mathrm{num}}
\]
are then interpreted as approximations of the abstract constructions restricted to \(\Sig_{\mathrm{data}}\). Thus empirical protein data play a realizational and inferential role: they provide concrete sectors of the deformation-first connection model without altering its quaternionic foundation.

\subsection{Protein observables generating the connection field}
\label{app:protein-frame-transport}
Let
\[
x:I\to\R^3
\]
be a smooth immersed backbone curve parameterized by arc length.
Choose a smooth adapted orthonormal frame
\[
F(\ell)=(t(\ell),n(\ell),b(\ell))
\in SO(3),
\qquad \ell\in I,
\]
along the curve.

In nondegenerate regions one may use the Frenet frame, while nearly collinear regimes may instead be treated using a Bishop frame in order to avoid singular torsion behavior.

Since the interval \(I\) is contractible, the frame field admits a smooth spin lift
\[
q:I\to \SU(2),
\]
unique up to the global sign ambiguity
\(
q\sim -q
\).

Using the convention of Eq.~\eqref{eq:quaternionic-transport-field}, the lifted frame determines an infinitesimal transport field
\(
\Omega(\ell)\in\mathfrak{su}(2)\cong\operatorname{Im}\HH.
\)
Consequently, over a short arc-length increment \(\delta \ell\),
\[
q(\ell+\delta \ell)
\approx
q(\ell)
\exp\!
\Bigl(
\frac12\Omega(\ell)\delta \ell
\Bigr).
\]

Writing
\[
\Omega(\ell)
=
\omega_1(\ell)\mathbf i
+
\omega_2(\ell)\mathbf j
+
\omega_3(\ell)\mathbf k
\]
with respect to the chosen adapted frame identifies the local quaternionic transport parameters associated with the protein backbone geometry.

The norm
\(
|\Omega(\ell)|
\)
represents the instantaneous rotational rate, while its direction specifies the local rotation axis.
In the Frenet framing, the tangential component recovers the classical torsion function.
More generally, different adapted framings yield corresponding projection formulas determined by the chosen gauge convention.

Discrete realizations derived from residue-indexed frames or atomic-coordinate data are deferred to later computational implementations and are not required for the abstract geometric theory developed here.
\section{Technical Foundations of Admissible Transport}

\subsection{Representative formulas for path operations}
\label{app:path-operations}

Two piecewise \(C^1\) paths
\[
\gamma_1,\gamma_2:[0,1]\to\Sig
\]
are equivalent if
\[
\gamma_2=\gamma_1\circ\varphi
\]
for some orientation-preserving piecewise \(C^1\) bijection \(\varphi:[0,1]\to[0,1]\). When no confusion can arise, we write \(\gamma\) for the equivalence class \([\gamma]\). Its source and target are
\[
s(\gamma):=\gamma(0),\qquad t(\gamma):=\gamma(1).
\]
For \(\xi\in\Sig\), the constant class at \(\xi\) is denoted by \(\id_\xi\). The reversal of \(\gamma\) is represented by
\[
\gamma^{-1}(\tau):=\gamma(1-\tau),\qquad \tau\in[0,1].
\]
If \(t(\gamma_1)=s(\gamma_2)\), the ordered concatenation \(\gamma_2\circ\gamma_1\) is the class represented by
\[
(\gamma_2\circ\gamma_1)(\tau):=
\begin{cases}
\gamma_1(2\tau), & \tau\in[0,\frac12],\\[1mm]
\gamma_2(2\tau-1), & \tau\in[\frac12,1].
\end{cases}
\]
At this stage a deformation path is not reduced to its endpoints. Ordered concatenation belongs to the primary transport structure.

\subsection{Additional properties of normalized unitary cocycle}
\label{app:cocycle-axioms}

Let
\[
\Gad(\Sig)^{(2)}
:=
\{
(\gamma_2,\gamma_1)\in
\Gad(\Sig)\times\Gad(\Sig)
:
s(\gamma_2)=t(\gamma_1)
\},
\]
and
\[
\Gad(\Sig)^{(3)}
:=
\{
(\gamma_3,\gamma_2,\gamma_1)\in
\Gad(\Sig)^3
:
s(\gamma_3)=t(\gamma_2),
\quad
s(\gamma_2)=t(\gamma_1)
\}.
\]
A normalized unitary cocycle on \(\Gad(\Sig)\) is a map
\[
\sigma:\Gad(\Sig)^{(2)}\to U(1)
\]
satisfying
\[
\sigma(\gamma_3,\gamma_2\circ\gamma_1)
\,\sigma(\gamma_2,\gamma_1)
=
\sigma(\gamma_3\circ\gamma_2,\gamma_1)
\,\sigma(\gamma_3,\gamma_2)
\]
for every composable triple
\[
(\gamma_3,\gamma_2,\gamma_1)\in\Gad(\Sig)^{(3)},
\]
together with the normalization identities
\[
\sigma(\id_{t(\gamma)},\gamma)=1,
\qquad
\sigma(\gamma,\id_{s(\gamma)})=1.
\]


\subsection{Additional properties of ordered transport algebra}
\begin{enumerate}
\item \(\Atr(\Sig)\) is algebraically generated by the coefficient sector
\(C_c^\infty(\Sig)\) together with the transport generators \(U_\gamma\);

\item the involution reverses transport classes:
\[
U_{\gamma^{-1}}=U_\gamma^*;
\]

\item whenever \((\gamma_2,\gamma_1)\) is composable and
\(\gamma_2\circ\gamma_1\) is still a non-unit admissible class, one has
\[
U_{\gamma_2}U_{\gamma_1}
=
\sigma(\gamma_2,\gamma_1)\,U_{\gamma_2\circ\gamma_1};
\]

\item there exists a map
\[
\kappa:\Gad(\Sig)^{(2)}_{\mathrm{ret}}\to C_c^\infty(\Sig)
\]
such that whenever \((\gamma_2,\gamma_1)\in \Gad(\Sig)^{(2)}_{\mathrm{ret}}\), one has
\[
U_{\gamma_2}U_{\gamma_1}=\kappa(\gamma_2,\gamma_1).
\]
In particular, unit-return products are recorded in the coefficient sector, but their concrete realization is not fixed at the abstract level of the present definition;

\item the coefficient sector acts covariantly on transport generators:
\[
f\,U_\gamma=f(t(\gamma))\,U_\gamma,
\qquad
U_\gamma\,f=f(s(\gamma))\,U_\gamma,
\qquad f\in C_c^\infty(\Sig).
\]
\end{enumerate}
At this structural stage, no pointwise identity \(1_\xi\) is introduced, and no relation of the form
\[
U_{\id_\xi}=1_\xi
\]
is imposed.

\subsection{Additional properties of commutative shadow}
A commutative shadow of \(\Atr(\Sig)\) is a \(C_c^\infty(\Sig)\)-bimodule projection onto the coefficient algebra, namely
\[
E(fag)=f\,E(a)\,g,
\qquad
f,g\in C_c^\infty(\Sig),\ a\in \Atr(\Sig),
\]
\[
E(f)=f,
\qquad f\in C_c^\infty(\Sig),
\]
\[
E^2=E,
\qquad
E(a^*)=E(a)^*,
\qquad a\in \Atr(\Sig).
\]

\subsection{Additional properties of transport core}
Let
\[
b_1=(U_1,\beta_1),
\qquad
b_2=(U_2,\beta_2),
\]
and define
\[
U_{21}:=U_1\cap \beta_1^{-1}(U_2).
\]
If \(U_{21}=\varnothing\), set
\[
(U_{b_2}f_2)(U_{b_1}f_1):=0.
\]
If \(U_{21}\neq\varnothing\), define the composed local bisection
\[
b_2\circ b_1:=(U_{21},\beta_2\circ\beta_1)
\]
and the coefficient function
\[
\sigma[b_2,b_1](x)
:=
\sigma\bigl(b_2(\beta_1(x)),\,b_1(x)\bigr),
\qquad x\in U_{21}.
\]
Then, for
\[
f_1\in C_c^\infty(U_1),
\qquad
f_2\in C_c^\infty(U_2),
\]
the product is defined on generators by
\[
(U_{b_2}f_2)(U_{b_1}f_1)
=
U_{b_2\circ b_1}\Bigl(\sigma[b_2,b_1]\,(f_2\circ\beta_1)\,f_1\Bigr),
\]
with the understanding that if \(\beta_2\circ\beta_1=\mathrm{id}_{U_{21}}\), the right-hand side is interpreted through the identity relations as the zero extension of
\[
\sigma[b_2,b_1]\,(f_2\circ\beta_1)\,f_1\in C_c^\infty(U_{21}).
\]
The involution is defined on generators by
\[
(U_b f)^*
:=
U_{b^{-1}}\,(\overline f\circ \beta^{-1}),
\qquad
b^{-1}:=(\beta(U),\beta^{-1}),
\]
and extended anti-linearly to \(\Atr^\infty(\Sig)\). The coefficient algebra \(C_c^\infty(\Sig)\) is thus embedded as the distinguished commutative sector of \(\Atr^\infty(\Sig)\).

\section{Proof of propositions and theorems}
\subsection{Proof of proposition~\ref{prop:global-geometric-cocycle}}
\label{app:proof-global-cocycle}
Let $(\gamma_2,\gamma_1)\in \Gad(\Sig)^{(2)}$. Choose representatives of $\gamma_1$ and $\gamma_2$ that are piecewise geodesic for the fixed metric $g$ and subordinate to the fixed good geodesic cover. Choose moreover a geodesic triangulation of a filling surface for the ordered concatenation, with every $2$-simplex contained in one geodesically convex chart of the cover. Define $\sigma(\gamma_2,\gamma_1)$ as the product of the local phase factors supplied by Definition~\ref{def:geometric-cocycle} over the oriented triangles of such a triangulation.

If $(\gamma_2,\gamma_1)$ is already $U$-elementary, one may take the one-triangle filling in $U$, so the above construction gives exactly $\sigma_U(\gamma_2,\gamma_1)$. Thus the global map extends the local cocycle.

If one changes the subdivision or the admissible triangulation, the ratio of the two resulting phase factors is the exponential of the integral of $\Theta$ over a closed piecewise geodesic $2$-cycle $C\subset \Sig$. Because
\[
\frac{1}{2\pi}[\Theta]\in H^2(\Sig;\mathbb Z),
\]
one has
\[
\int_C \Theta \in 2\pi \mathbb Z,
\]
and therefore
\[
\exp\!\Bigl(i\int_C \Theta\Bigr)=1.
\]
Hence $\sigma(\gamma_2,\gamma_1)$ is independent of all auxiliary choices and takes values in $U(1)$ by construction.

Normalization follows from the fact that if one of the paths is a unit arrow, then the relevant filling can be chosen degenerate, so every local phase factor is $1$. For a composable triple $(\gamma_3,\gamma_2,\gamma_1)$, the two products
\[
\sigma(\gamma_3,\gamma_2\circ\gamma_1)\,\sigma(\gamma_2,\gamma_1)
\qquad\text{and}\qquad
\sigma(\gamma_3\circ\gamma_2,\gamma_1)\,\sigma(\gamma_3,\gamma_2)
\]
correspond to two geodesic triangulations of the same ordered filling attached to the triple. Their ratio is again the exponential of the integral of $\Theta$ over a closed piecewise geodesic $2$-cycle, hence equals $1$ by the same integrality argument. Therefore the cocycle identity holds, and $\sigma$ is a normalized unitary cocycle on $\Gad(\Sig)$.
\subsection{Proof of proposition~\ref{prop:canonical-smooth-representation}}
Let
\[
b=(U,\beta)
\]
be an admissible local bisection. By Definition~\ref{def:transport-operator-b}, each fiber map
\[
\widetilde\tau_{b,x}:\mathcal V_x\to \mathcal V_{\beta(x)}
\]
is unitary. Because $\beta:U\to \beta(U)$ is a diffeomorphism and the half-density factor is transported canonically, $V_b$ preserves the $L^2$-norm on sections supported in $U$ and maps them onto sections supported in $\beta(U)$. Hence $V_b$ extends to a bounded partial isometry on $\Hh$. Since $M_{\widetilde f}$ is bounded, each operator
\[
V_b M_{\widetilde f}
\]
is bounded on $\Hh$.

We first check that the assignment on generators is compatible with the defining relations of $\Atr^\infty(\Sig)$. If $U'\subset U$ is open and $\operatorname{supp}(f)\subset U'$, then the restriction $b|_{U'}=(U',\beta|_{U'})$ defines the same local connector family on the support of $f$, so
\[
V_b M_{\widetilde f}=V_{b|_{U'}}M_{\widetilde f}.
\]
Thus the restriction relations are respected. For the identity bisection $e_U=(U,\mathrm{id}_U)$, the connector at each $x\in U$ is constant, hence $\widetilde\tau_{e_U,x}=\mathrm{id}_{\mathcal V_x}$ and $V_{e_U}$ acts as the identity on sections supported in $U$. Therefore
\[
V_{e_U}M_{\widetilde f}=M_{\widetilde f},
\]
so the identity relations are also respected. The assignment therefore descends to a well-defined linear map on $\Atr^\infty(\Sig)$.

We next verify multiplicativity on generators. Let
\[
b_1=(U_1,\beta_1),\qquad b_2=(U_2,\beta_2),
\]
with
\[
f_1\in C_c^\infty(U_1),\qquad f_2\in C_c^\infty(U_2),
\]
and define
\[
U_{21}:=U_1\cap \beta_1^{-1}(U_2).
\]
If $U_{21}=\varnothing$, then $\beta_1(\operatorname{supp}f_1)$ is disjoint from $\operatorname{supp}f_2$, hence
\[
(V_{b_2}M_{\widetilde f_2})(V_{b_1}M_{\widetilde f_1})=0
=\pi_{\mathrm{can}}\bigl((U_{b_2}f_2)(U_{b_1}f_1)\bigr).
\]
Assume now that $U_{21}\neq\varnothing$ and write
\[
b_2\circ b_1=(U_{21},\beta_2\circ\beta_1).
\]
Let
\[
\psi\in C_c^\infty(\Sig,\mathcal V),\qquad x\in U_{21},\qquad y=(\beta_2\circ\beta_1)(x).
\]
Then
\[
(V_{b_1}M_{\widetilde f_1}\psi)(\beta_1(x))=\widetilde\tau_{b_1,x}\bigl(f_1(x)\psi(x)\bigr),
\]
and hence
\[
\begin{aligned}
\bigl((V_{b_2}M_{\widetilde f_2})(V_{b_1}M_{\widetilde f_1})\psi\bigr)(y)
&=\widetilde\tau_{b_2,\beta_1(x)}\Bigl(f_2(\beta_1(x))\,\widetilde\tau_{b_1,x}\bigl(f_1(x)\psi(x)\bigr)\Bigr)\\
&=\widetilde\tau_{b_2,\beta_1(x)}\widetilde\tau_{b_1,x}\Bigl((f_2\circ\beta_1)(x)f_1(x)\psi(x)\Bigr).
\end{aligned}
\]
By the cocycle identity encoded by the geometric phase construction,
\[
\widetilde\tau_{b_2,\beta_1(x)}\widetilde\tau_{b_1,x}
=\sigma[b_2,b_1](x)\,\widetilde\tau_{b_2\circ b_1,x}.
\]
Therefore
\[
\bigl((V_{b_2}M_{\widetilde f_2})(V_{b_1}M_{\widetilde f_1})\psi\bigr)(y)
=\bigl(V_{b_2\circ b_1}M_{\widetilde h}\psi\bigr)(y),
\]
where
\[
h:=\sigma[b_2,b_1]\,(f_2\circ\beta_1)f_1\in C_c^\infty(U_{21}).
\]
Thus
\[
(V_{b_2}M_{\widetilde f_2})(V_{b_1}M_{\widetilde f_1})
=V_{b_2\circ b_1}M_{\widetilde h}.
\]
If $\beta_2\circ\beta_1\neq \mathrm{id}_{U_{21}}$, this is exactly
\[
\pi_{\mathrm{can}}\bigl((U_{b_2}f_2)(U_{b_1}f_1)\bigr).
\]
If $\beta_2\circ\beta_1=\mathrm{id}_{U_{21}}$, then $b_2\circ b_1$ is the identity bisection on $U_{21}$, so by the identity relation already checked,
\[
V_{b_2\circ b_1}M_{\widetilde h}=M_{\widetilde h}=
\pi_{\mathrm{can}}\bigl((U_{b_2}f_2)(U_{b_1}f_1)\bigr).
\]
Hence $\pi_{\mathrm{can}}$ is multiplicative.

We next verify the involution. Let
\[
b=(U,\beta),\qquad f\in C_c^\infty(U).
\]
Since each fiber transport is unitary, one has $V_b^*=V_{b^{-1}}$. Therefore
\[
\pi_{\mathrm{can}}(U_b f)^*=(V_bM_{\widetilde f})^*=M_{\widetilde f}^{\,*}V_b^*=M_{\widetilde{\overline f}}V_{b^{-1}}.
\]
For every test section $\psi$ and every $x\in U$ one checks directly that
\[
M_{\widetilde{\overline f}}V_{b^{-1}}\psi
=V_{b^{-1}}M_{\widetilde{\overline f\circ \beta^{-1}}}\psi.
\]
Hence
\[
\pi_{\mathrm{can}}(U_b f)^*=V_{b^{-1}}M_{\widetilde{\overline f\circ \beta^{-1}}}
=\pi_{\mathrm{can}}\bigl((U_b f)^*\bigr).
\]
So $\pi_{\mathrm{can}}$ is a $*$-representation.

For the global identity bisection
\[
e=(\Sig,\mathrm{id}),
\]
one has $V_e=\mathrm{Id}_{\Hh}$, so
\[
\pi_{\mathrm{can}}(U_e f)=M_f
\]
for all $f\in C_c^\infty(\Sig)$. Since multiplication by compactly supported smooth functions acts nondegenerately on $\Hh$, the representation $\pi_{\mathrm{can}}$ is nondegenerate.

Finally, both multiplication by a compactly supported smooth coefficient and the operator $V_b$ preserve smooth compact support. Therefore the dense subspace
\[
C_c^\infty(\Sig,\mathcal V)
\]
is invariant under
\[
\pi_{\mathrm{can}}\bigl(\Atr^\infty(\Sig)\bigr).
\]

\subsection{Proof of theorem~\ref{thm:selfadjoint-dirac}}
The bundle
\[
\mathcal V=S\otimes L\otimes |\Lambda|^{1/2}
\]
is Hermitian, and the connection
\[
\nabla^{\mathcal V}
\]
is the tensor-product connection induced by the spin$^c$ connection on \(S\), the unitary connection
\(\nabla^L\) on \(L\), and the Levi-Civita connection on \(|\Lambda|^{1/2}|\). Hence
\[
\mathfrak D_0=c_0\circ \nabla^{\mathcal V}
\]
is a formally self-adjoint Dirac-type operator associated with a Hermitian Clifford connection.

For
\[
\varphi,\psi\in C_c^\infty(\Sig,\mathcal V),
\]
the standard Green formula for Dirac-type operators gives
\[
\langle \mathfrak D_0\varphi,\psi\rangle_{\Hh}
=
\langle \varphi,\mathfrak D_0\psi\rangle_{\Hh},
\]
because the sections are compactly supported and therefore no boundary term appears. Thus
\(\mathfrak D_0\) is symmetric on
\[
C_c^\infty(\Sig,\mathcal V).
\]

By Definition~\ref{def:canonical-prototype}, the Riemannian manifold
\[
(\Sig,g)
\]
is complete. The standard completeness theorem for formally self-adjoint Dirac-type operators on complete
Riemannian manifolds therefore applies, and yields that \(\mathfrak D_0\) is essentially self-adjoint on
\[
C_c^\infty(\Sig,\mathcal V).
\]
Let \(\mathfrak D\) denote its closure. Then \(\mathfrak D\) is self-adjoint, and it is the unique self-adjoint
realization of the same first-order Dirac-type differential expression.

Now let
\[
f\in C_c^\infty(\Sig),
\qquad
\psi\in C_c^\infty(\Sig,\mathcal V).
\]
By Proposition~\ref{prop:canonical-smooth-representation}, the coefficient element \(f\) is represented by
\[
\pi_{\mathrm{can}}(f)=M_f,
\]
multiplication by \(f\). Since \(f\) is smooth with compact support, \(M_f\) preserves
\(C_c^\infty(\Sig,\mathcal V)\).

Because \(\nabla^{\mathcal V}\) is a Clifford connection, the Leibniz rule gives
\[
\mathfrak D_0(f\psi)=c_0(df)\psi + f\,\mathfrak D_0\psi.
\]
Therefore
\[
[\mathfrak D_0,M_f]\psi = c_0(df)\psi.
\]
Since \(\mathfrak D\) extends \(\mathfrak D_0\), the same identity holds on the core
\(C_c^\infty(\Sig,\mathcal V)\):
\[
[\mathfrak D,\pi_{\mathrm{can}}(f)]\psi = c_0(df)\psi.
\]
Under the metric identification
\[
d_{\EP}f=(df)^\sharp_g\in \EP,
\]
this is exactly the Hilbert-space Clifford action of \(d_{\EP}f\), so
\[
[\mathfrak D,\pi_{\mathrm{can}}(f)] = c_{\Hh}(d_{\EP}f)
\]
on \(C_c^\infty(\Sig,\mathcal V)\).

Finally, because \(df\) has compact support, fiberwise Clifford multiplication by \(df\) defines a bounded
bundle endomorphism of \(\mathcal V\), hence a bounded operator on \(\Hh\).

\subsection{Proof of theorem~\ref{thm:local-germ-invariance}}
The tangent-groupoid deformation of a first-order differential operator is intrinsic: it depends only on
the operator and on the smooth structure of the underlying manifold. Applied to
\[
\mathfrak D_0=c_0\circ \nabla^{\mathcal V},
\]
this produces a well-defined germ
\[
\mathfrak L_\xi(\mathfrak D)
\]
along \(T_\xi \Sig\times\{0\}\), independently of any local presentation.

To describe concrete representatives of this germ, choose a geodesic normal coordinate chart centered at
\(\xi\) together with a compatible unitary trivialization of \(\mathcal V\). In such a presentation,
\(\mathfrak D_0\) is represented by a first-order Dirac-type differential operator whose principal
symbol at \(\xi\) is the Clifford multiplication by covectors. The associated normal operator is the
constant-coefficient tangent Dirac operator obtained from that symbol. If one changes the normal chart,
the derivative of the transition map at \(\xi\) is orthogonal with respect to \(g_\xi\); if one changes
the compatible trivialization, the new frame differs by a fiberwise unitary transformation of
\(\mathcal V_\xi\). Together these changes induce a canonical unitary intertwiner on
\[
L^2(T_\xi \Sig,\mathcal V_\xi),
\]
so any two representatives of \(\mathcal N_\xi(\mathfrak D_0)\) are unitarily equivalent. This proves
that the tangent operator is canonical only up to that natural unitary equivalence, while the germ
itself is intrinsically well defined.

Since each representative \(D(\xi)\) is a constant-coefficient symmetric Dirac-type operator on the
Euclidean vector space \(T_\xi \Sig\) endowed with the inner product \(g_\xi\), Fourier transform
identifies it with multiplication by a Hermitian matrix-valued symbol. The standard Fourier analysis then
shows that \(D(\xi)\) is symmetric and essentially self-adjoint on
\[
C_c^\infty(T_\xi \Sig,\mathcal V_\xi).
\]

\end{appendices}


\bibliography{sn-bibliography}

\end{document}